\documentclass[preprint,11pt,authoryear]{elsarticle}
\usepackage{tikz}
\usepackage{xcolor}
\usepackage{amsfonts}
\usepackage{amsmath}
\usepackage{amsthm}
\usepackage{amssymb}
\usepackage{times}
\usepackage{booktabs}
\usepackage{multirow}
\usepackage{booktabs}
\usepackage{setspace}
\usepackage{enumerate}
\usepackage{natbib}
\usepackage{color}
\usepackage{colortbl}
\usepackage{ wasysym }
\usepackage{verbatim}
\usetikzlibrary{shapes,arrows,calc,positioning}
\usepackage{lscape}
\usepackage{array}
\usepackage{setspace}
\usepackage{kotex}
\usepackage{url}
\usepackage{arydshln}

\usepackage{caption, subcaption}
\usepackage{rotating}
\usepackage[notcite, notref]{showkeys}
\newcolumntype{x}[1]{%
	>{\centering\hspace{0pt}}p{#1}}%

\definecolor{Gray}{gray}{0.9}
\newcolumntype{g}{>{\columncolor{Gray}}c}

\def\i.i.d.{\buildrel {\rm i.i.d.} \over \sim}

\def\cw#1 { \overset{\mathbb{P}}{\underset{#1}{\longrightarrow}} }
\def\Real{\mathbb{R}}
\def\Natu0{\mathbb{Z}_{\ge 0}}
\def\P#1{{\mathbb{P}}\left(#1\right)}

\def\E#1{{\mathbb E}\left[#1\right]}

\def \rcov#1#2 {{\rm cov}_{#1}\left( #2\right)}

\newtheorem{example}{Example}

\newtheorem{lemma}{Lemma}
\newtheorem{theorem}{Theorem}

\newtheorem{corollary}{Corollary}
\newtheorem{remark}{Remark}
\newtheorem{proposition}{Proposition}

\newtheorem*{toy*}{Toy Model}

\newtheorem{model}{Model}

\def\cov#1{{\rm  cov}\left[#1\right]}

\definecolor{Gray}{gray}{0.9}

\begin{document}
	
	\begin{frontmatter}
   \title{Counting models with excessive zeros ensuring stochastic monotonicity}
		
		%\author[IH]{Woojoo Lee}
		%\ead{lwj221@gmail.com}
		%\address[IH]{Department of Statistics, Inha University, 235 Yonghyun-Dong, Nam-Gu, Incheon 402-751, Korea.}
		%\author[BSNU]{Sojung C. Park}
		%\ead{sojungpark@snu.ac.kr}
		%\address[BSNU]{College of Business Administration, Seoul National University, Gwanak-gu, Seoul 151-916, Republic of Korea.}

		\author[EH]{Hyemin Lee \corref{cor1}}
		\ead{hyemin@bok.or.kr}
		\address[EH]{Department of Statistics, Ewha Womans University, Seoul, Republic of Korea}
		%\address[SF]{Department of Statistics and Actuarial Science, Simon Fraser University, BC, Canada}
		
		\author[EH]{Dohee Kim \corref{cor1}}
		\ead{doheekim22@ewha.ac.kr}
		%\address[EHM]{Department of Mathematics, Korea Military Academy, Seoul, Republic of Korea}
		
		\author[TW]{Banghee So \corref{cor2}}
		\ead{bso@towson.edu}
        \address[TW]{Department of Mathematics, Towson University, 7800 York Rd, Towson, MD, 21252, USA}

		\author[EH]{Jae Youn Ahn\corref{cor2}}
		\ead{jaeyahn@ewha.ac.kr}
		
		%\address[WI]{Wisconsin School of Business, University of Wisconsin-Madison, Madison, WI 53706, USA.}
		\cortext[cor1]{First Authors}
		\cortext[cor2]{Corresponding Author}

		%\author[EH]{Kyongwon Kim\corref{cor2}}
		%\ead{kimk@ewha.ac.kr}
		%\cortext[cor2]{Corresponding Authors}

\begin{abstract}
Standard count models such as the Poisson and Negative Binomial models often fail to capture the large proportion of zero claims commonly observed in insurance data. To address such issue of excessive zeros, zero-inflated and hurdle models introduce additional parameters that explicitly account for excess zeros, thereby improving the joint representation of zero and positive claim outcomes. These models have further been extended with random effects to accommodate longitudinal dependence and unobserved heterogeneity. However, their consistency with fundamental probabilistic principles in insurance, particularly stochastic monotonicity, has not been formally examined. This paper provides a rigorous analysis showing that standard counting random-effect models for excessive zeros may violate this property, leading to inconsistencies in posterior credibility. We then propose new classes of counting random-effect models that both accommodate excessive zeros and ensure stochastic monotonicity, thereby providing fair and theoretically coherent credibility adjustments as claim histories evolve.
\end{abstract}

		\begin{keyword}
Zero-inflated model, Hurdle model, stochastic order, credibility
			% Collective risk model, increasing/decreasing conditional variance property, copula, dependence modeling, average severity
			%Bonus-malus system \sep Recurrent neural network \sep Ratemaking system
			
			JEL Classification: C300
		\end{keyword}

	\end{frontmatter}
	
	\vfill
	
	\pagebreak

	\vfill
	
	\pagebreak
	
	%-----------------------------------------------------------------------------------------------------------------------------------------------------------
	%-----------------------------------------------------------------------------------------------------------------------------------------------------------
	\section{Introduction}
In insurance analytics, modeling claim frequency is essential for accurately determining premiums, allocating reserves, and assessing underwriting risks. This task is particularly challenging in areas such as automobile, homeowners, and health insurance, where many policyholders do not file any claims during the coverage period. The resulting large number of zero claims has been widely documented in actuarial and statistical studies. Although standard count models based on the Poisson or Negative Binomial distributions can accommodate zeros, they lack a separate component that specifically explains the occurrence of zeros. Consequently, these models often fit the data poorly when zeros are overly frequent, as adjusting the parameters to capture the zero part typically worsens the fit for the positive counts. As emphasized by \citet{hilbe2011negative} and \citet{perumean2013zero}, explicitly modeling the excess of zeros is crucial for obtaining reliable inference for both zero and nonzero outcomes. To address this issue, the statistical literature has developed two main classes of models: \textbf{zero-inflated models} and \textbf{hurdle models}.

Zero-inflated models use a two-component structure that combines a binary process with a standard count distribution \citep{lambert1992zero}, thereby distinguishing between structural zeros generated by the binary process and sampling zeros arising from the count distribution itself. The most common examples are the zero-inflated Poisson and zero-inflated Negative Binomial models. Building on this idea, many studies have extended and applied zero-inflated approaches in insurance \citep{yip2005modeling, boucher2009number, mouatassim2012poisson, chen2019subgroup, so2024enhanced, so2025advancing, so2025zeroinflatedTweedie}.

The hurdle model, also referred to as a two-part model \citep{heilbron1994zero}, adopts a two-stage structure consisting of a hurdle stage and a post-hurdle stage \citep{cragg1971some, mullahy1986specification, winkelmann2008econometric}.
The most common specification places the hurdle at zero: in the first stage, a binary model determines whether an observation crosses the hurdle, and conditional on this, the second stage models the positive counts using a standard count distribution.
Similar to zero-inflated models, the most widely used hurdle specifications in insurance are the Poisson hurdle and Negative Binomial hurdle models, which have been applied in various actuarial studies \citep{boucher2008modelling, zhang2022new}.

Despite their different formulations, both zero-inflated and hurdle models aim to address count data with an excessive number of zeros.
For clarity, we collectively refer to them as \textbf{counting models for excessive zeros}.
A key feature of these models is the inclusion of additional parameters that explicitly govern the generation of zeros, thereby allowing for a better joint representation of zero and positive outcomes.
Building on this advantage, counting models for excessive zeros can be further extended by incorporating random effects—referred to here as \textbf{counting mixture models for excessive zeros}—to jointly account for longitudinal dependence and the excess-zeros in claim frequency data \citep{boucher2007risk, boucher2008models, boucher2011correlated, aguero2013full, zhang2022new}.
A seminal discussion of random-effect models and their conceptual link to insurance ratemaking is provided by \citet{nelder1997credibility}, who highlighted the theoretical connection between random effects and credibility theory.

Despite these advancements, an important gap remains.
While counting random-effect models for excessive zeros improve model fit and capture unobserved heterogeneity, their consistency with fundamental probabilistic principles in insurance, particularly \textbf{stochastic monotonicity}, has not been systematically examined.
From an actuarial perspective, it is natural to expect that as the number of past claims increases, the a posteriori claim frequency distribution should indicate a higher level of risk.
\citet{purcaru2003dependence} rigorously established this property for the Poisson random effect model, and also extended the result to the broader exponential dispersion family within generalized linear mixed and dynamic random-effect frameworks.

However, the property of stochastic monotonicity in counting random‐effect models with excessive zeros has not yet been examined in the literature.
Addressing this theoretical gap, we make two main contributions.
First, we provide rigorous mathematical evidence that existing counting random‐effect models for excessive zeros may violate stochastic monotonicity—a property that is not merely undesirable but fundamentally inconsistent with sound actuarial reasoning.
Such violations can distort incentive structures and create potential for moral hazard, leading to unfair or even negative credibility adjustments \citep{pinquet2020poisson, li2021dynamic, ahn2021ordering}.
Second, we propose new classes of counting random‐effect models for excessive zeros that guarantee stochastic monotonicity, ensuring that posterior credibilities consistently increase with claim history and thereby maintaining fairness and theoretical coherence in a posteriori pricing.

%
%We first provides a rigorous mathematical analysis demonstrating how such models can violate stochastic monotonicity, and further introduces new formulations of counting random-effect models for excessive zeros that ensure compliance with this essential property.
%
%	
%This study makes two main contributions.

The remainder of this paper is organized as follows. Section \ref{sec.2} introduces the notations, definitions, and lemmas used throughout the paper. Section \ref{sec.3} analyzes and provides a rigorous mathematical verification that the Poisson-hurdle mixture model with correlated random effects does not, in general, guarantee monotonicity. Section \ref{sec.4} derives and validates conditions under which monotonicity is ensured for Poisson-hurdle mixture models with independent random effects. Section \ref{sec.5} presents a Poisson-hurdle mixture model with comonotonic random effects and derives conditions guaranteeing monotonicity. Extensions to Negative Binomial and zero-inflated specifications are developed in the Appendix.
Section \ref{sec.6} provides a case study using the LGPIF dataset. Finally, Section \ref{sec.8} concludes the paper.
	
\section{Preliminary results}\label{sec.2}
	We introduce the distributions and notations that will be used throughout the paper:
	
	\begin{itemize}
		\item $\operatorname{Ber}(\theta)$: \text{the Bernoulli distribution with success probability} $\theta \in [0, 1] $ .
		\item $\operatorname{Pois}(\lambda)$: Poisson distribution with mean $\lambda>0$.
  \item $\operatorname{NB}(r,p)$: the Negative Binomial distribution with shape parameter
  $r>0$ and success probability $p\in(0,1)$, with probability mass function
  \[
  \P{N=n}
  = \binom{n+r-1}{n} (1-p)^r p^{\,n},
  \qquad n\in\mathbb{Z}_{\ge 0},
  \]
  having mean $\displaystyle \frac{rp}{1-p}$ and variance
  $\displaystyle \frac{rp}{(1-p)^2}$.
		\item $\operatorname{MVN}((\mu_1,\mu_2)^{\top}, \Sigma)$: Bivariate normal distribution with mean vector $ (\mu_1,\mu_2)^{\top} $ and covariance matrix $ \Sigma $.
		\item $\operatorname{Beta}(a,b)$: Beta distribution in (0,1) with mean $\displaystyle\frac{a}{a+b}$  and variance $\displaystyle\frac{ab}{(a+b)^2(a+b+1)}$.
		\item $\operatorname{Gamma}(\alpha,\beta)$: Gamma distribution with shape parameter $ \alpha>0$ and rate parameter $\beta>0 $ with mean $\displaystyle\frac{\alpha}{\beta}$  and variance $\displaystyle\frac{\alpha}{\beta^2}$.
	\end{itemize}

	We use $B(a,b)$ for Beta function, defined as $  B(a, b) = \displaystyle\int_0^1 x^{a - 1}(1 - x)^{b - 1} \, dx$.
We use $\Gamma(\cdot)$ to denote Gamma function, defined as $\Gamma(z) = \displaystyle\int_0^\infty x^{z - 1} e^{-x} \, dx $.
			We use $\sigma(x) = (1+e^{-x})^{-1}$ to denote the logistic function.
We use $x_{1:t}:=(x_1, \cdots, x_t)\in\Real^t$.
We use the notation
\[
\langle x_{1:t}, y_{1:t}\rangle
\]
to denote the inner product between $x_{1:t}, y_{1:t}\in\mathbb{R}^t$.

For a random vector $X_{1:t}$, we write
\[
f(x_{1:t})
\]
for its probability density function when it is absolutely continuous, or for its probability mass function when it is purely discrete.
Unless otherwise specified, we will simply refer to $f$ as the density function, regardless of whether
the underlying distribution is continuous or discrete.
 We also use
\[
f(x_{1:t}\mid y_{1:m})
\]
to denote the conditional density function at $X_{1:t}=x_{1:t}$ conditional on another random vector $Y_{1:m}=y_{1:m}$.

For $x_{1:t}, y_{1:t}\in\mathbb{R}^t$, we write $x_{1:t}\wedge y_{1:t}$ and $x_{1:t}\vee y_{1:t}$ for their componentwise minimum and maximum, respectively, and use $x_{1:t}\le y_{1:t}$ to denote coordinatewise inequality.
A function $g:\mathbb{R}^t \to \mathbb{R}_{\ge 0}$ is called
\textbf{multivariate totally positive of order two (MTP$_2$)} on $D\subseteq \Real^k$ if
\[
g(x_{1:t})\,g(y_{1:t}) \;\ge\; g(x_{1:t}\wedge y_{1:t})\,g(x_{1:t}\vee y_{1:t})
\]
for all $x_{1:t},y_{1:t}\in\mathbb{R}^t$ such that $x_{1:t}\wedge y_{1:t}\in D$ and $x_{1:t}\vee y_{1:t}\in D$.
If $g$ is the density of a random vector $X_{1:t}$, we simply say that $X_{1:t}$ is MTP$_2$.
When $t=2$, the property is referred to as TP$_2$.

%For give random vectors $X_{1:t}$ and $Y_{1:t}$, let $f$ and $g$ denote their density functions.
%We say that $X_{1:t}$ is smaller than $Y_{1:t}$ in the
%\textbf{multivariate likelihood ratio order}\footnote{Jaeyoun: MLR order is not used, hence can be erased?} (MLR order), written $X_{1:t} \le_{\mathrm{LR}} Y_{1:t}$, if
%\[
%f(x_{1:t})\,g(y_{1:t}) \;\le\; f(x_{1:t}\wedge y_{1:t})\,g(x_{1:t}\vee y_{1:t})
%\]
%for all $x_{1:t},y_{1:t}\in\mathbb{R}^t$.

    Let $X$ and $Y$ be univariate random variables with density functions $f$ and $g$, respectively. Then, $X$ is said to be smaller than $Y$ in the \textbf{likelihood ratio order},
    written $X \le_{\mathrm{LR}} Y$, if
    \[
    f(x)g(y)\ge f(y)g(x)
    \]
    for all $x\le y$.
    %Similarly, $X$ is said to be smaller than $Y$ in the \textbf{stochastic order}, written $X \le_{\mathrm{ST}} Y$, if
    %\[
    %\P{X>x}\le \P{Y>x}
    %\]
    %for all $x\in\Real$.

    The followings are well known results.
    %\begin{lemma}[\cite{shaked2007stochastic}]\label{lem.1}
    %Let $X_{1:t}$ and $X_{1:t}^*$ be $t$-dimensional random vectors such that
    %\[
    %X_{1:t} \;\le_{\mathrm{LR}}\; X_{1:t}^*.
    %\]
    %Then $X_{1:t} \le_{\mathrm{st}} X_{1:t}^*$, and in particular
    %\[
    %\E{h(X_{1:t})} \;\le\; \E{h(X_{1:t}^*)}
    %\]
    %  for all non-decreasing function $h:\Real^t\rightarrow \Real$ for which the expectation exists.
    %\end{lemma}

    \begin{lemma}[\cite{denuit2006actuarial}]\label{lem.0}
    Let $\Theta$ be a latent random variable with cumulative distribution function $G$ supported on $\mathcal T\subseteq\mathbb R$.
    Conditional on $\Theta=\theta$, let $X_1,\dots,X_t$ be independent random variables with conditional distribution functions $F_j(\cdot\mid\theta)$ and corresponding densities $f_j(\cdot\mid\theta)$ for $j=1,\dots,t$.
    We further assume that the joint distribution of $X_{1:t}=(X_1,\dots,X_t)$ is given by
    \[
    \P{X_{1:t}\le x_{1:t}}
    = \int_{\mathcal T} \prod_{j=1}^t F_j(x_j\mid\theta)\,{\rm d}G(\theta),
    \qquad x_{1:t}\in\mathbb R^t.
    \]
    If, for each $j$, $f_j(x_j\mid\theta)$ is TP$_2$ in $(x_j,\theta)$, then $X_{1:t}$ is MTP$_2$.
    \end{lemma}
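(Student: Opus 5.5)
The plan is to write down the joint density of $X_{1:t}$ and verify the MTP$_2$ inequality directly. Conditional independence given $\Theta$ gives
\[
f(x_{1:t}) = \int_{\mathcal T} \prod_{j=1}^t f_j(x_j\mid\theta)\,{\rm d}G(\theta).
\]
The argument then has two steps.

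First, for fixed $\theta$, the integrand $h(x_{1:t},\theta):=\prod_{j=1}^t f_j(x_j\mid\theta)$ is MTP$_2$ jointly in $(x_{1:t},\theta)\in\mathbb R^{t+1}$. The reason is that each factor $f_j(x_j\mid\theta)$ depends on only two coordinates and is TP$_2$ in them by hypothesis. A function of two coordinates that is TP$_2$ in them is MTP$_2$ on the full space, because the lattice operations act coordinatewise. Finally, a product of nonnegative MTP$_2$ functions is MTP$_2$: multiply the $t$ defining inequalities, all of which have nonnegative sides.

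Second, MTP$_2$ is preserved under marginalization. Integrating $h$ against ${\rm d}G(\theta)$ over $\theta$ leaves an MTP$_2$ function of $x_{1:t}$. This is the Karlin--Rinott marginalization theorem. I would invoke it directly, but it can also be proved by hand from the basic composition formula. Fix $x,y$ and write $F_1(\theta)=h(x,\theta)$, $F_2(\theta)=h(y,\theta)$, $F_3(\theta)=h(x\wedge y,\theta)$ and $F_4(\theta)=h(x\vee y,\theta)$. The MTP$_2$ property of $h$ in $(x,\theta)$ gives
\[
F_1(\theta)F_2(\theta')\le F_3(\theta\wedge\theta')\,F_4(\theta\vee\theta')
\]
after relabeling, where the roles of the pairs are set by the target inequality $f(x)f(y)\le f(x\wedge y)f(x\vee y)$. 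Under that hypothesis, the four-functions (Ahlswede--Daykin) inequality on the totally ordered set $\mathcal T$ with measure $G$ yields $\int F_1\,{\rm d}G\int F_2\,{\rm d}G\le\int F_3\,{\rm d}G\int F_4\,{\rm d}G$. That is exactly the MTP$_2$ inequality for $f$.

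The one delicate point is the direction convention. The paper's displayed definition writes $g(x)g(y)\ge g(x\wedge y)g(x\vee y)$, which appears reversed relative to the standard MTP$_2$ definition. I would state explicitly that the standard direction $g(x\wedge y)g(x\vee y)\ge g(x)g(y)$ is intended, and that the TP$_2$ hypothesis on $f_j$ uses the same convention. With that fixed, the main obstacle is only the marginalization step.

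For that step I would first try to cite \cite{denuit2006actuarial} or Karlin--Rinott (1980). If I need a self-contained argument, I would reduce to the one-dimensional integral inequality above. Because $\theta$ is scalar, that inequality follows by splitting $\mathcal T^2$ into the regions $\theta\le\theta'$ and $\theta>\theta'$ and symmetrizing the integrand $F_1(\theta)F_2(\theta')+F_1(\theta')F_2(\theta)$.
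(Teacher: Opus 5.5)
Your proposal is correct and follows the standard route behind the cited result; the paper itself gives no proof and defers to \citet{denuit2006actuarial}. The integrand $\prod_{j} f_j(x_j\mid\theta)$ is MTP$_2$ in $(x_{1:t},\theta)$ as a product of TP$_2$ factors, and integrating out the scalar $\theta$ against $\mathrm{d}G$ preserves MTP$_2$ by the Karlin--Rinott/four-functions argument on the chain $\mathcal T$, exactly as your choice of $F_1,\dots,F_4$ sets up.

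You are also right that the paper's displayed MTP$_2$ inequality is reversed; Lemma~\ref{lem.1} and the proof of Lemma~\ref{lem.2} use the standard direction. If you write the symmetrization out by hand, note that the off-diagonal comparison is not termwise. For $\theta<\theta'$ put $p=F_1(\theta)F_2(\theta')$, $q=F_1(\theta')F_2(\theta)$, $r=F_3(\theta)F_4(\theta')$ and $s=F_3(\theta')F_4(\theta)$. Then $p,q\le r$, and the diagonal inequalities give $pq\le rs$, so $p+q\le r+pq/r\le r+s$ (trivially if $r=0$).
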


    The following result is a direct consequence of the definition of the MTP$_2$ property.
    Although this fact is well known, we include a brief proof for completeness.
    \begin{lemma}\label{lem.1}
    Assume that $X_{1:t}$ is MTP$_2$. Fix an index $i=1, \cdots, t$ and write $X_{-i}:=(X_1,\dots,X_{i-1},X_{i+1},\dots,X_t)$ and $x_{-i}\in\Real^{t-1}$ denote realizations of $X_{-i}$. Let
    \[
    S:=\left\{ x_{-i}\in\Real^{t-1} \,:\, f(x_{-i})>0\right\}.
    \]
    where $f_{X_{-i}}(x_{-i})$ denotes the density of $X_{-i}$ evaluated at $x_{-i}\in\Real^{t-1}$.
     Then, for any $x_{-i}, x_{-i}'\in S$ with $x_{-i}\le x_{-i}'$, we have
      \[
      \left[X_i\mid X_{-i}=x_{-i} \right]\;\le_{\operatorname{LR}}\;\left[X_i\mid X_{-i}=x_{-i}' \right].
      \]
    \end{lemma}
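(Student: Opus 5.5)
We prove the likelihood ratio ordering by writing out the conditional densities and applying the MTP$_2$ inequality to a suitably chosen pair of points. For $x_{-i}\in S$, the conditional density of $X_i$ given $X_{-i}=x_{-i}$ is
\[
f(x_i\mid x_{-i})=\frac{f(x_i,x_{-i})}{f(x_{-i})},
\]
where $(x_i,x_{-i})$ denotes the full vector with $x_i$ placed in coordinate $i$. By the definition of $\le_{\operatorname{LR}}$, we must show that for all $x_i\le y_i$,
\[
f(x_i\mid x_{-i})\,f(y_i\mid x_{-i}')\;\ge\; f(y_i\mid x_{-i})\,f(x_i\mid x_{-i}').
\]
Since both denominators $f(x_{-i})$ and $f(x_{-i}')$ are strictly positive on $S$, we can multiply through by them. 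The claim then reduces to the inequality on the joint density
\[
f(x_i,x_{-i})\,f(y_i,x_{-i}')\;\ge\; f(y_i,x_{-i})\,f(x_i,x_{-i}').
\]

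The key step is to apply the MTP$_2$ inequality to the two ``crossed'' points $u:=(y_i,x_{-i})$ and $v:=(x_i,x_{-i}')$. Because $x_i\le y_i$ and $x_{-i}\le x_{-i}'$ coordinatewise, the componentwise minimum and maximum are
\[
u\wedge v=(x_i,x_{-i}),\qquad u\vee v=(y_i,x_{-i}').
\]
The MTP$_2$ property of $f$ gives $f(u)f(v)\ge f(u\wedge v)f(u\vee v)$. Written out, this reads
\[
f(y_i,x_{-i})\,f(x_i,x_{-i}')\;\ge\; f(x_i,x_{-i})\,f(y_i,x_{-i}').
\]

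This is the reverse of the inequality we need, so the paper's definition has to be checked before the argument can be closed. The standard definition of MTP$_2$ is $f(x)f(y)\le f(x\wedge y)f(x\vee y)$, which is exactly what the paper's Lemma \ref{lem.0} (from \cite{denuit2006actuarial}) delivers. The inequality as typed in the paper's definition has the sign flipped, which looks like a typo. With the standard orientation, the displayed inequality reverses and gives precisely the required one. If $f$ were MTP$_2$ in the paper's literal sense, the same computation would instead yield the reverse likelihood ratio order. So the proof must be run with the standard convention, and the write-up should note this explicitly.

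Two further points need care, and I expect neither to be a real obstacle. First, the case where some joint density values vanish: the inequalities are stated for nonnegative quantities, so they remain valid, and dividing by $f(x_{-i})f(x_{-i}')>0$ is legitimate. Second, the continuous versus discrete setting: the argument is purely pointwise on densities or mass functions, so it covers both, consistent with the paper's convention of calling $f$ a density in either case.
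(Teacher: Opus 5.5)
Your proposal is correct and follows the same route as the paper. Both arguments divide out the positive marginals $f(x_{-i})f(x_{-i}')$, which reduces the likelihood ratio inequality to $f(x_i,x_{-i})f(y_i,x_{-i}')\ge f(y_i,x_{-i})f(x_i,x_{-i}')$; this is the MTP$_2$ inequality applied to the crossed pair $(y_i,x_{-i})$ and $(x_i,x_{-i}')$.

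Your remark on the sign convention is also right. The paper's displayed definition reverses the standard inequality, and the argument only closes with the standard orientation $f(u)f(v)\le f(u\wedge v)f(u\vee v)$. That standard orientation is what the result behind Lemma~\ref{lem.0} gives, and it is the one the paper itself uses for the normal density in the proof of Lemma~\ref{lem.2}.
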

    \begin{proof}
      It is enough to show
      \begin{equation}\label{eqq1}
      f(x_i' \mid x_{-i}')f(x_i\mid x_{-i})\ge f(x_i\mid x_{-i}')f(x_i'\mid x_{-i}), \qquad \hbox{for any} x_i\le x_i'
      \end{equation}
      for any $x_{-i}, x_{-i}'\in S$ with $x_{-i}\le x_{-i}'$.
      Since \eqref{eqq1} is equivalent with
      \[
      f(x_i', x_{-i}')f(x_i, x_{-i})\ge f(x_i, x_{-i}')f(x_i', x_{-i})
      \]
      we know that the MTP$_2$ property of $X_{1:t}$ implies \eqref{eqq1}.
    \end{proof}

\section{Violation of credibility order in Poisson-hurdle mixture model}\label{sec.3}

Consider the stochastic process $(X_t)_{t\ge 1}$ representing insurance risks with finite means.
In insurance applications, the basic object of interest is the \textbf{base credibility}
\[
\E{X_{t+1}\mid X_{1:t}=x_{1:t}},
\]
which, for brevity, we denote by $\E{X_{t+1}\mid x_{1:t}}$.
From a ratemaking perspective, it is desirable that these conditional expectations preserve the following monotonicity property:
\begin{equation}\label{eq.11}
\E{X_{t+1}\mid x_{1:t}}
\;\le\;
\E{X_{t+1}\mid x_{1:t}'},
\qquad\text{for all } x_{1:t}\le x_{1:t}'.
\end{equation}
Given the diversity of insurance products, it is more natural to require a generalized version of this condition:
\begin{equation}\label{eq.12}
\E{h(X_{t+1})\mid x_{1:t}}
\;\le\;
\E{h(X_{t+1})\mid x_{1:t}'},
\qquad\text{for all } x_{1:t}\le x_{1:t}',
\end{equation}
for every nondecreasing function $h:\mathbb{R}\to\mathbb{R}$ such that the expectation exists.

We say that the process $(X_t)_{t\ge 1}$ possesses the \textbf{base credibility order} if it satisfies~\eqref{eq.11},
and the \textbf{general credibility order}\footnote{This property is referred to in the literature as being
\textbf{conditionally increasing in sequence} \citep{shaked2007stochastic}.} if it satisfies~\eqref{eq.12}.
When no distinction between these two conditions is necessary, we collectively refer to them as the
\textbf{credibility order}.
These ordering properties play a central role in insurance ratemaking, as discussed in \citet{purcaru2003dependence}
and \citet{denuit2006actuarial}.

While the credibility order has been rigorously established for the Poisson random-effects model and, more generally, for generalized linear mixed models (GLMMs)
\citep{purcaru2003dependence, denuit2006actuarial}, no comparable analysis has been conducted for
zero-inflated or hurdle models.
This paper fills that gap by examining mixture models for count data with excessive zeros and analyzing whether they preserve the credibility order property.
In this section, we focus on the Poisson--hurdle mixture model and assess its compliance with credibility ordering.
\ref{app:ext} extends the analysis to hurdle mixture models with more general counting distributions, and also to zero-inflated specifications.
We begin by formulating a general version of the Poisson--hurdle mixture model, which serves as the foundation for the model variants studied later in the paper.

%
%Before that this section focuses on the Poisson-hurdle mixture models, later in Section XXX zero-inflated mixture models also, and show that the credibility can be violated.
%
%shows that the credibility orders of mixture models for count data can be violated.
%
%
%\citet{purcaru2003dependence} shows that credibility order is well preserved for the Poisson random effect model, and
%\citet{denuit2006actuarial} extended this result to the broader exponential dispersion family within generalized linear mixed and dynamic random-effect frameworks
%
%
%
%
%, more specifically the Poisson-hurdle mixture models with .
%
%This paper examines mixture models for count data exhibiting an excess of zeros, and investigates
%the credibility order of these models.
%Two principal approaches are available for modeling count data with excessive zeros:
%the \textit{zero-inflated} models
%\citep{lambert1992zero, yip2005modeling, boucher2009number} and the \textit{hurdle} models
%\citep{boucher2008modelling, zhang2022new}.
%Mathematically, these two frameworks are closely related, although their interpretations differ.
%

	\begin{model}\label{mod.0}
		The \textbf{Poisson–hurdle mixture model with general random effects} for $(Y_{t})_{t\ge 1}$ is defined as
		\[
		Y_{t} := Z_{t}(1+N_{t}), \quad t \ge 1,
		\]
		under the following assumptions:
		
		\begin{itemize}
\item[i.] \textbf{Observation distribution.}
Conditional on $\Theta=(\Theta_1,\Theta_2)$, the pairs $(Z_t,N_t)$ are independent \ across $t$, and
$Z_t \perp N_t \mid \Theta$.
Their conditional distributions are
\begin{equation}\label{eq.d.1}
Z_t \mid \Theta \;\sim\; \mathrm{Ber}\!\big(\eta_1(\Theta_1)\big),
\qquad
N_t \mid \Theta \;\sim\; \mathrm{Pois}\!\big(\eta_2(\Theta_2)\big),
\end{equation}
where $\eta_1: \mathcal{R}(\Theta_1) \to (0,1)$ and $\eta_2: \mathcal{R}(\Theta_2) \to (0,\infty)$ are proper activation functions for Bernoulli distribution and Poisson distribution, respectively. Here, $\mathcal{R}(\Theta_j)$ denotes the set of possible values taken by $\Theta_j$ $(j=1,2)$.

\item[ii.] \textbf{Random effects distribution.}
The random effects $\Theta=(\Theta_1,\Theta_2)$ follow a general joint distribution
$\mathcal{L}(\Theta)$ supported on
$\mathcal{R}(\Theta_1)\times \mathcal{R}(\Theta_2)$.

		\end{itemize}
		
	\end{model}

\subsection{Limitations of the Poisson--hurdle mixture model with correlated random effects}

Here, we consider a specific case of Model~\ref{mod.0}, namely the Poisson--hurdle mixture model with correlated random effects, where the random effects follow a bivariate normal distribution.
We examine its limitations with respect to the credibility order property, and Section~\ref{sec.3.2} presents numerical examples that clearly demonstrate violations of these orders.
The formal specification of this model is provided below.

	\begin{model}\label{mod.1}
The \textbf{Poisson–hurdle mixture model with bivariate normal random effects}
for $(Y_t)_{t\ge 1}$ is given by Model~\ref{mod.0} with the following specifics:
		\begin{itemize}
\item[i.] \textbf{Observation distribution.} The conditional distributions in \eqref{eq.d.1} is specified by
\[
Z_t \mid \Theta \;\sim\; \mathrm{Ber}\!\big(\sigma(\Theta_1)\big),
\qquad
N_t \mid \Theta \;\sim\; \mathrm{Pois}\!\big({\mathrm {exp}}(\Theta_2)\big).
\]
			\item[ii.] \textbf{Random effects distribution.}
			The random effects follow a bivariate normal distribution,
			\[
			\Theta \;\sim\; \operatorname{MVN}\!\left(\begin{pmatrix}
  \mu_1\\
  \mu_2
\end{pmatrix}, \Sigma\right),
			\]
			with positive definite covariance matrix
			\[
			\Sigma =
			\begin{bmatrix}
				\sigma_1^2 & \rho\,\sigma_1\sigma_2 \\
				\rho\,\sigma_1\sigma_2 & \sigma_2^2
			\end{bmatrix}
			\]
for $\sigma_1, \sigma_2>0$ and $\rho\in(-1,1)$.
		\end{itemize}
		
	\end{model}

Our first interest lies in to investigate whether Model~\ref{mod.1} preserve the most basic form of the base credibility order; the inequality
	\[
	\E{Y_{2}\mid Y_{1}=y_1} \;\le\; \E{Y_{2}\mid Y_{1}=y_1'}
	\quad\text{for } y_1 \le y_1'.
	\]
Noting that $Y_1=0$ is equivalent with $Z_1=0$, and $Y_1=y_1$ is equivalent with $Z_1=1$ and $N_1=y_1-1$ for $y_1=\mathbb{Z}_{\ge 1}$, we distinguish two different cases
\begin{itemize}
  \item[i.] For $y_1\in\mathbb{Z}_{\ge 1}$, we are interested in the following inequality
  \begin{equation}\label{eq.310}
  \E{Y_{2}\mid Y_{1}=y_1+1} \;\ge\; \E{Y_{2}\mid Y_{1}=y_1}
  \end{equation}
  which is equivalent with
  \begin{equation}\label{eq.311}
  \E{Y_{2}\mid Z_{1}=1, N_1=y_1} \;\ge\; \E{Y_{2}\mid Z_1=1, N_1=y_1-1}.
  \end{equation}
  \item[ii.] We are interested in the following inequality
  \begin{equation}\label{eq.321}
  \E{Y_{2}\mid Y_{1}=1} \;\ge\; \E{Y_{2}\mid Y_{1}=0}
  \end{equation}
  which is equivalent with
  \begin{equation}\label{eq.322}
  \E{Y_{2}\mid Z_{1}=1, N_1=0} \;\ge\; \E{Y_{2}\mid Z_1=0}.
  \end{equation}
\end{itemize}

In comparing the inequality in \eqref{eq.311}, the main difference between the two terms lies in $N_1 = y_1$ versus $N_1 = y_1 - 1$.
If we assume that the magnitude of $N_1$ has a certain \emph{positive influence}\footnote{The precise meaning of ``positive influence'' will be clarified in the next lemma.} on $Y_2$, then the inequality in \eqref{eq.311} is expected to hold, as formally established in Lemma~\ref{lem.2}. The proof of Lemma \ref{lem.2} is in \ref{app:proofs}.

%
%First, the following lemma, whose proof can be found in \ref{app:proofs}, shows that within the positive-count stratum ($Z_1 = 1$), the monotonicity condition in \eqref{eq.310} holds under a positive dependence between the random effects.
%

	\begin{lemma}\label{lem.2}
		Under Model~\ref{mod.1}, we further assume $\rho \in [0,1)$ and $\sigma_1^2,\sigma_2^2>0$. Then, for any integers \(1 \le y_1 \le y_1'\), we have the following
\begin{itemize}
  \item[i.] We have the following likelihood ratio order
  \[
  \left[\Theta \mid y_1\right] \le_{\rm LR} \left[\Theta \mid y_1'\right].
  \]
  \item[ii.] For all non-decreasing function $h:\Real_{\ge 0} \rightarrow \Real$ for which the expectation exists, we have
  \[
		\E{h(Y_{2})\mid Y_{1}=y_1}
		\;\le\;
		\E{h(Y_{2})\mid Y_{1}=y_1'}.
		\]
\end{itemize}

	\end{lemma}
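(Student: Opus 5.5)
For $y_1\ge1$ the event $\{Y_1=y_1\}$ coincides with $\{Z_1=1,\,N_1=y_1-1\}$. So the posterior density of $\Theta=(\Theta_1,\Theta_2)$ given $Y_1=y_1$ is proportional to
\[
\pi(\theta\mid y_1)\;\propto\; \sigma(\theta_1)\,\frac{e^{(y_1-1)\theta_2}}{(y_1-1)!}\exp\!\big(-e^{\theta_2}\big)\,\phi_\Sigma(\theta),
\]
where $\phi_\Sigma$ is the bivariate normal density of Model~\ref{mod.1}. For part i I will use the multivariate version of the likelihood ratio order: $[\Theta\mid y_1]\le_{\rm LR}[\Theta\mid y_1']$ means
\[
\pi(\theta\mid y_1)\,\pi(\theta'\mid y_1')\le \pi(\theta\wedge\theta'\mid y_1)\,\pi(\theta\vee\theta'\mid y_1') \quad\text{for all }\theta,\theta'.
\]

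Write $\pi(\theta\mid y)=c_y\, e^{y\theta_2}\,g(\theta)$, with
\[
g(\theta)=\sigma(\theta_1)e^{-\theta_2}\exp\!\big(-e^{\theta_2}\big)\phi_\Sigma(\theta).
\]
The ratio condition then splits into two facts.

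\begin{itemize}
\item[(a)] $g$ is MTP$_2$ (here TP$_2$ on $\mathbb R^2$). Factors depending on a single coordinate are trivially TP$_2$. The bivariate normal density is TP$_2$ exactly when the off-diagonal entry of $\Sigma^{-1}$ is nonpositive, i.e. $\rho\ge0$; this is the only place the assumption $\rho\in[0,1)$ enters. Products of TP$_2$ functions are TP$_2$.
\item[(b)] The tilt $e^{y'\theta_2}/e^{y\theta_2}=e^{(y'-y)\theta_2}$ is nondecreasing in $\theta$ when $y\le y'$. Moreover $e^{y\theta_2}$ depends only on $\theta_2$, and $\theta_2+\theta_2'=\min(\theta_2,\theta_2')+\max(\theta_2,\theta_2')$, so
\[
e^{y\theta_2}e^{y'\theta_2'}\le e^{y\min(\theta_2,\theta_2')}e^{y'\max(\theta_2,\theta_2')}.
\]
\end{itemize}

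Multiplying the TP$_2$ inequality for $g$ by the inequality in (b) gives the desired likelihood ratio inequality, and the normalizing constants $c_y$ cancel. This proves part i.

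For part ii, the multivariate likelihood ratio order implies the usual multivariate stochastic order (Shaked and Shanthikumar). Hence $\E{\psi(\Theta)\mid y_1}\le\E{\psi(\Theta)\mid y_1'}$ for every $\psi$ that is nondecreasing in $\theta$. Conditional on $\Theta$, $Y_2$ is independent of $Y_1$, so
\[
\E{h(Y_2)\mid y_1}=\E{\psi_h(\Theta)\mid y_1},\qquad
\psi_h(\theta)=\E{h(Y_2)\mid\Theta=\theta}=h(0)\big(1-\sigma(\theta_1)\big)+\sigma(\theta_1)\,\E{h(1+N)},
\]
with $N\sim\mathrm{Pois}(e^{\theta_2})$. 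It therefore suffices to show $\psi_h$ is nondecreasing in each coordinate.

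\begin{itemize}
\item In $\theta_2$: the Poisson family is stochastically increasing in its mean and $e^{\theta_2}$ is increasing, so $\E{h(1+N)}$ is nondecreasing in $\theta_2$, and its coefficient $\sigma(\theta_1)$ is positive.
\item In $\theta_1$: $\psi_h$ is the convex combination $h(0)+\sigma(\theta_1)\big(\E{h(1+N)}-h(0)\big)$. Since $h$ is nondecreasing and $1+N\ge1>0$, we have $\E{h(1+N)}\ge h(0)$, and $\sigma$ is increasing.
\end{itemize}

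The main obstacle is justifying the passage from the bivariate LR order to the expectation inequality. I plan to cite Shaked and Shanthikumar (Theorem 6.E.8 / 6.B.1) rather than reprove it. Integrability of $h(Y_2)$ is assumed, and the argument requires no further work.
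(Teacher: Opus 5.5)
Your proposal is correct and follows essentially the same route as the paper. Part i uses Bayes' factorization of the posterior, the MTP$_2$ property of the bivariate normal prior when $\rho\ge 0$, and a TP$_2$-type inequality for the likelihood in $(y_1,\theta_2)$; isolating the tilt $e^{y\theta_2}$ and absorbing the single-coordinate factors into $g$ is simply a tidier version of the paper's ratio case analysis. Part ii uses conditional independence together with ``LR implies ST'' applied to the coordinatewise nondecreasing function $h(0)(1-\sigma(\theta_1))+\sigma(\theta_1)\,\mathbb{E}[h(1+N)\mid\theta_2]$, where you check monotonicity directly instead of via partial derivatives as the paper does.
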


In contrast, the comparison in \eqref{eq.322} involves $(Z_1, N_1) = (1, 0)$ versus $Z_1 = 0$.
Even under the same assumption that both $Z_1$ and $N_1$ have positive influence on $Y_2$, it is unclear whether the inequality in \eqref{eq.322} holds.
This ambiguity arises because the information $Z_1 = 1$ (on the left-hand side) versus $Z_1 = 0$ (on the right-hand side) supports the inequality in \eqref{eq.322}, whereas the information $N_1 = 0$ (on the left) versus the absence of information on $N_1$ (on the right) suggests the opposite direction.
Hence, it remains indeterminate which effect dominates. Indeed, Theorem~\ref{thm.1} shows that, under suitable assumptions on the random–effects distribution with positive correlation, the inequality in~\eqref{eq.321} can be reversed, thereby violating the base credibility order.

%We begin by comparing the base credibilities corresponding to the two past outcomes
%\(Y_1=0\) and \(Y_1=1\).
%The following theorem, proved in Appendix~\ref{app:proofs}, shows that the inequality in~\eqref{eq.321} may fail to hold, and hence consequently the base credibility order can be violated.
%

%Next, in the cross-hurdle comparison ($Y_1=0$ versus $Y_1=1$), the following theorem, whose proof can be found in \ref{app:proofs}, shows that the inequality in~\eqref{eq.321} can be reversed, thereby violating the base credibility order.

\begin{theorem}\label{thm.1}
Consider the setting of Model~\ref{mod.1}.
For fixed $\sigma_2>0$ and $\rho\in(-1,1)$, we have
\[
   \lim_{\sigma_1^2\to 0^+}
   \left(\,\E{Y_{2}\mid Y_{1}=0}
           -\E{Y_{2}\mid Y_{1}=1}\right)
   \;>\;0.
\]
\end{theorem}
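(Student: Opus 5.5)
The idea is to let the Bernoulli random effect collapse. As $\sigma_1^2\to0^+$, $\Theta_1$ degenerates to the constant $\mu_1$, so $Z_1$ stops carrying any information about $\Theta_2$. The event $Y_1=0$ then becomes uninformative. The event $Y_1=1$, however, still reveals $N_1=0$, which pushes the posterior of $\Theta_2$ downward, and this produces the reversal.

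\textbf{Step 1 (representation and explicit formulas).} Write
\[
\Theta_1=\mu_1+\sigma_1W_1,\qquad \Theta_2=\mu_2+\sigma_2\bigl(\rho W_1+\sqrt{1-\rho^2}\,W_2\bigr),
\]
with $W_1,W_2$ i.i.d.\ standard normal. In this representation the law of $\Theta_2$, and of $(W_1,\Theta_2)$, does not depend on $\sigma_1$. Conditional independence given $\Theta$, together with $\E{Y_2\mid\Theta}=\sigma(\Theta_1)(1+e^{\Theta_2})$, gives
\[
\E{Y_2\mid Y_1=0}=\frac{\E{(1-\sigma(\Theta_1))\,\sigma(\Theta_1)\,(1+e^{\Theta_2})}}{\E{1-\sigma(\Theta_1)}},
\]
\[
\E{Y_2\mid Y_1=1}=\frac{\E{\sigma(\Theta_1)^2\,e^{-e^{\Theta_2}}\,(1+e^{\Theta_2})}}{\E{\sigma(\Theta_1)\,e^{-e^{\Theta_2}}}},
\]
using $\P{Y_1=1\mid\Theta}=\sigma(\Theta_1)e^{-e^{\Theta_2}}$.

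\textbf{Step 2 (passing to the limit).} As $\sigma_1\to0^+$, $\sigma(\Theta_1)=\sigma(\mu_1+\sigma_1W_1)\to p:=\sigma(\mu_1)$ pointwise. All integrands are dominated by $1+e^{\Theta_2}$, which is integrable because $\Theta_2$ is normal with a law independent of $\sigma_1$. Dominated convergence therefore applies to each numerator and denominator. The denominators converge to $1-p>0$ and $p\,\E{e^{-e^{\Theta_2}}}>0$. Hence
\[
\lim \E{Y_2\mid Y_1=0}=p\,\bigl(1+\E{e^{\Theta_2}}\bigr),
\]
\[
\lim \E{Y_2\mid Y_1=1}=p\Bigl(1+\frac{\E{e^{\Theta_2}e^{-e^{\Theta_2}}}}{\E{e^{-e^{\Theta_2}}}}\Bigr).
\]
Note that $\rho$ plays no role in the limit, which is why the theorem holds for every $\rho\in(-1,1)$.

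\textbf{Step 3 (strict inequality).} It remains to show
\[
\E{e^{\Theta_2}}\,\E{e^{-e^{\Theta_2}}}>\E{e^{\Theta_2}e^{-e^{\Theta_2}}},
\]
that is, $\cov{e^{\Theta_2},e^{-e^{\Theta_2}}}<0$. Since $\theta\mapsto e^\theta$ is strictly increasing and $\theta\mapsto e^{-e^\theta}$ is strictly decreasing, Chebyshev's covariance inequality gives $\le0$. To get strictness, use the symmetrized form with an independent copy $\Theta_2'$:
\[
2\,\mathrm{cov}=\E{(e^{\Theta_2}-e^{\Theta_2'})(e^{-e^{\Theta_2}}-e^{-e^{\Theta_2'}})}.
\]
The integrand is strictly negative whenever $\Theta_2\neq\Theta_2'$, which happens almost surely because $\sigma_2>0$ makes $\Theta_2$ nondegenerate. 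Multiplying by $p>0$ yields the strictly positive limit of the difference claimed in Theorem~\ref{thm.1}.

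The main point requiring care is Step 2. One has to make sure the limit is taken in a representation where $\Theta_2$ has a fixed law, so that conditioning on $\Theta_1$ does not degenerate, and that the domination is uniform in $\sigma_1$. Both are handled by the $(W_1,W_2)$ parametrization above. Step 3 is routine.
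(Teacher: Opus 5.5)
Your proof is correct and follows the paper's route. Both write each conditional mean as a ratio of expectations weighted by $\P{Y_1=y_1\mid\Theta}$, let $\Theta_1$ collapse to $\mu_1$, and reduce the sign of the limit to $\mathrm{cov}(W,e^{-W})<0$ with $W=e^{\Theta_2}$. The one difference is that you pass to the limit through the explicit coupling $\Theta_1=\mu_1+\sigma_1W_1$, where the paper uses characteristic functions and L\'evy's continuity theorem. Your coupling makes the dominated-convergence step with the unbounded integrand $1+e^{\Theta_2}$ fully rigorous, and your symmetrization argument gives the strictness explicitly.
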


	From both a regulatory and actuarial perspective, the violation of credibility order is problematic. Such violations can lead to undesirable incentives, such as the strategic submission of minor claims with the aim of minimizing future credibility costs, resembling arbitrage situations, or otherwise undermine fairness in a posteriori pricing. In practice, this can foster moral hazard in which lower–risk policyholders file trivial/fake claims to avoid potential credibility increases.

	\subsection{Numerical examples}\label{sec.3.2}

We illustrate several numerical examples within the framework of Model~\ref{mod.1} in which
the base credibility order is violated.
Our analysis primarily focuses on the comparison of
$\mathbb{E}[Y_2 \mid Y_1=y_1]$ for $y_1=0$ and $y_1=1$,
since comparisons involving $y_1 \ge 1$ are guaranteed to be ordered under the conditions of Lemma~\ref{lem.2}.
As shown in \ref{app:Example}, we also note that this ordering for $y_1 \ge 1$ need not hold when the correlation parameter $\rho$ is negative.
Since Model~\ref{mod.1} does not admit a closed-form posterior distribution, we approximate the
relevant quantities using Markov chain Monte Carlo (MCMC) with a Monte Carlo sample size of
$S=1000$. To assess the Monte Carlo variability and compute standard errors, the estimation procedure
is repeated over $100$ independent runs.
% {\color{red}(Jaeyoun to Hyemin: when we calculate the standard error, simulations are repeteated. We have not specify this number of repetition.)}

Table~\ref{sigma1_simulation} reports Monte Carlo estimates of
$\mathbb{E}[Y_{2}\mid Y_{1}=0]$ and $\mathbb{E}[Y_{2}\mid Y_{1}=1]$
obtained by varying $\sigma_1^2$ while fixing $\mu_1=\mu_2=0$, $\sigma_2^2=1$, and $\rho=0.5$.
As $\sigma_1^2$ decreases, the ordering between the two conditional expectations is reversed, namely,
\[
\E{Y_2 \mid Y_1=1}\le \E{Y_2 \mid Y_1=0}.
\]
This behavior is consistent with Theorem~\ref{thm.1}.

%First, we examine how the conditional expectations
%$\mathbb{E}[Y_2 \mid Y_1=0]$ and $\mathbb{E}[Y_2 \mid Y_1=1]$ evolve as the parameter
%$\sigma_1^2$ approaches zero, thereby supporting Theorem~\ref{thm.1}.
%
%
%In addition, through further simulation studies, we show that violations of the base credibility order can also occur when $\sigma_2^2$ or $\mu_2$ takes sufficiently large values.

% }

Table~\ref{sigma_2_simulation} presents Monte Carlo estimates obtained by varying $\sigma_{2}^2$
while fixing $\mu_1=\mu_2=0$, $\sigma_{1}^2=1$, and $\rho=0.5$.
As $\sigma_{2}^2$ increases, violations of the base credibility order are again observed.

\begin{table}[ht]
\centering
\caption{Comparison of
$\E{Y_2 \mid Y_1=0}$ and $\E{Y_2 \mid Y_1=1}$ as $\sigma_{1}^2$ decreases}
\label{sigma1_simulation}
\begin{tabular}{ccc}
\toprule
$\sigma_{1}^2$ & $\mathbb{E}[Y_2 \mid Y_1 = 0]$ & $\mathbb{E}[Y_2 \mid Y_1 = 1]$ \\
\midrule
$5.0$ & 0.7113 (0.0085) & 1.2061 (0.0064) \\
$2.0$ & 0.9319 (0.0101) & 1.0576 (0.0052) \\
$1.0$ & 1.0780 (0.0091) & 0.9630 (0.0046) \\
$0.1$ & 1.3073 (0.0110) & 0.8396 (0.0032) \\
\bottomrule
\end{tabular}
\caption*{\footnotesize Note: Parentheses report Monte Carlo standard errors (MCSEs)}
\end{table}

%Table~\ref{sigma_2_simulation} reports Monte Carlo estimates ($S=300{,}000$) obtained by varying $\sigma_{2}^2$ while fixing $\mu_1=\mu_2=0$, $\sigma_{1}^2=1$, and $\rho=0.5$.
%As $\sigma_{2}^2$ increases-that is, as the variance of the random effect in the post-hurdle stage becomes larger-we can confirm that violations of the base credibility order occur.

\begin{table}[ht]
\centering
\caption{Comparison of
$\E{Y_2 \mid Y_1=0}$ and $\E{Y_2 \mid Y_1=1}$ as $\sigma_{2}^2$ increases}
\label{sigma_2_simulation}
\begin{tabular}{ccc}
\toprule
$\sigma_{2}^2$ & $\mathbb{E}[Y_2 \mid Y_1 = 0]$ & $\mathbb{E}[Y_2 \mid Y_1 = 1]$ \\
\midrule
0.01 & 0.8269 (0.0036) & 1.1732 (0.0034) \\
0.10 & 0.8436 (0.0040) & 1.1539 (0.0037) \\
1.00 & 1.0780 (0.0091) & 0.9630 (0.0046) \\
2.00 & 1.5067 (0.0309) & 0.8617 (0.0042) \\
\bottomrule
\end{tabular}
\caption*{\footnotesize Note: Parentheses report MCSEs.}
\end{table}

Moreover, beyond variations in the covariance parameters $\sigma_1^2$ and $\sigma_2^2$,
violations of the base credibility order may also arise when $\mu_2$ is sufficiently large.
Table~\ref{mu2_simulation} reports Monte Carlo estimates obtained by varying $\mu_2$
while fixing $\mu_1=0$, $\sigma_1^2=\sigma_2^2=1$, and $\rho=0.5$.
As $\mu_2$ increases, violations of the base credibility order are observed.

\begin{table}[ht]
\centering
\caption{Comparison of
$\E{Y_2 \mid Y_1=0}$ and $\E{Y_2 \mid Y_1=1}$ as $\mu_{2}$ increases}
\label{mu2_simulation}
\begin{tabular}{ccc}
\toprule
$\mu_{2}$ & $\mathbb{E}[Y_2 \mid Y_1 = 0]$ & $\mathbb{E}[Y_2 \mid Y_1 = 1]$ \\
\midrule
$-2$ & 0.4976 (0.0026) & 0.6969 (0.0026) \\
$-1$ & 0.6526 (0.0043) & 0.8061 (0.0031) \\
$0$  & 1.0797 (0.0099) & 0.9554 (0.0056) \\
$1$  & 2.1982 (0.0269) & 1.1227 (0.0060) \\
$2$  & 5.2635 (0.0705) & 1.2502 (0.0068) \\
\bottomrule
\end{tabular}
\caption*{\footnotesize Note:Parentheses report MCSEs.}
\end{table}

% \begin{remark}
%   \color{blue}In case of Zero-inflated model which is the version of Model \ref{mod.1}, the violation of the base credibility order is similarly observed, and we provide the specific numerical example in Appendix XXX.
% \end{remark}

\begin{remark}\label{rmk.1}
The same phenomenon can occur for the \emph{zero-inflated} counterpart of Model~\ref{mod.1}, see Model \ref{mod.7}. In particular, violations of the base credibility order are demonstrated in \ref{app:Example}.
\end{remark}

\section{Poisson–hurdle mixture model with the independent random effects}\label{sec.4}

%When claim histories are restricted to positive counts, Lemma~\ref{lem.2} ensures the
%general credibility order under positive correlation of the random effects. By contrast,
Theorem~\ref{thm.1}, together with the numerical examples, indicates that there is no simple or
universal joint distribution of the random effects that guarantees monotonicity in the
Poisson--hurdle mixture model.
Motivated by this challenge, we now consider tractable subclasses of dependence structures that
facilitate analytical derivations of the conditions under which the credibility order holds.
Specifically, we examine two extremal cases of dependence between the random effects:
independent random effects (this section) and comonotonic random effects (Section~\ref{sec.5}).
These two cases provide mathematically convenient frameworks that simplify the analysis while
allowing for a rigorous investigation of credibility orders.

In this section, we introduce a Poisson--hurdle mixture model with independent random effects.
Unlike Model~\ref{mod.1}, the random effects governing the hurdle and count components are assumed
independent and are specified via conjugate priors: a Beta distribution for the Bernoulli hurdle
part and a Gamma distribution for the Poisson positive-count part. This formulation yields a
tractable framework with closed-form predictive distributions, providing a convenient setting for
examining conditions under which the base credibility order is satisfied.
It should be emphasized, however, that this model satisfies only the \emph{base credibility order} and
not the \emph{general credibility order}, as demonstrated in Example~\ref{ex.1} below.
Note that Section~\ref{sec.5} presents the comonotonic random–effects model, which achieves the general credibility order.
Accordingly, the model developed in this section should be regarded as a \emph{motivating example} that provides a preliminary illustration of the main ideas.

	\begin{model}\label{mod.2}
The \textbf{Poisson–hurdle mixture model with independent random effects}
for $(Y_t)_{t\ge 1}$ is given by Model~\ref{mod.0} with the following specifics:

		\begin{itemize}
\item[i.] \textbf{Observation distribution.} The conditional distributions in \eqref{eq.d.1} is specified by
\[
Z_t \mid \Theta \;\sim\; \mathrm{Ber}\!\big(\Theta_1\big),
\qquad
N_t \mid \Theta \;\sim\; \mathrm{Pois}\!\big(\Theta_2\big).
\]
			\item[ii.] \textbf{Random effects distribution.}
The random effects $\Theta_{1}$ and $\Theta_{2}$ are independent with the following specifications
			\[
			\Theta_{1} \sim \operatorname{Beta}\!\bigl(a,\,b\bigr),
			\qquad
			\Theta_{2} \sim \operatorname{Gamma}\!\bigl(\alpha,\,\beta\bigr),
			\]
			for some constants $a, b, \alpha, \beta>0$.
		\end{itemize}

	\end{model}

% \begin{remark}{\color{red}[Jaeyoun: This remark will be moved to data analysis part.]}
% Due to convenient conjugacy property, the Poisson–Gamma random–effects model is widely used in insurance applications.
% In this context, the a priori rate, specific to each policyholder, is employed to capture observable heterogeneity through explanatory covariates (not to be confused with the unobserved heterogeneity modeled via random effects).
% In the standard Poisson–Gamma formulation, the canonical specification is
% \[
% N_{i,t}\,\overset{\mathrm{iid}}{\sim}\,\operatorname{Pois}(\lambda_i\,\Theta^{[2]}_i),
% \qquad
% \Theta^{[2]}_i\sim\operatorname{Gamma}(\alpha,\beta).
% \]
% Since $\lambda_i\Theta^{[2]}_i\sim \operatorname{Gamma}\!\bigl(\alpha,\,\beta/\lambda_i\bigr)$, this specification is equivalent to
% \[
% N_{i,t}\,\overset{\mathrm{iid}}{\sim}\,\operatorname{Pois}(\Theta_{i,2}^*),
% \qquad
% \Theta_{i,2}^*\sim\operatorname{Gamma}(\alpha,\beta/\lambda_i).
% \]
% Thus, the covariates enter through the rate parameter of the Gamma distribution.
% By contrast, in Model~\ref{mod.2} we attach covariates to the \emph{shape} parameter $\alpha_i$ (and to the Beta prior via $b_i$).
% This choice greatly facilitates the monotonicity analysis of the predictive mean (see Section~4.1).
% \end{remark}

The conjugate structure of the Bernoulli–Beta and Poisson–Gamma families ensures that Model~\ref{mod.2} possesses the following closed-form posterior distribution.

\begin{corollary}\label{prop.1}
Consider the setting of Model~\ref{mod.2}. For the claim history $Y_{1:t}=y_{1:t}$, define
\[
r_t := \sum_{s=1}^t \mathbf{1}_{\{y_s>0\}},
\qquad\hbox{and}\quad
m_t := \sum_{s=1}^t (y_s-1)\,\mathbf{1}_{\{y_s>0\}}.
\]
where $\sum_{s=1}^{0}$ is defined to be zero.
Then the posterior of $\Theta=(\Theta_1,\Theta_2)$ is given by
\[
\pi(\theta_1,\theta_2\mid y_{1:t})
=\frac{1}{B(a^*_t,b^*_t)}\,(\theta_1)^{a^*_t-1}(1-\theta_1)^{b^*_t-1}
\cdot
\frac{(\beta^*_t)^{\alpha^*_t}}{\Gamma(\alpha^*_t)}\,(\theta_2)^{\alpha^*_t-1}e^{-\beta^*_t\theta_2},
\]
with updated parameters
\[
a^*_t = a + r_t,\quad
b^*_t = b + t - r_t,\quad
\alpha^*_t = \alpha + m_t,\quad
\beta^*_t = \beta + r_t.
\]
Consequently, the one–step–ahead predictive mean is
\[
\E{Y_{t+1}\mid y_{1:t}}
= \frac{a+r_t}{a+b+t}\;\Biggl(1+\frac{\alpha+m_t}{\beta+r_t}\Biggr).
\]

\noindent In particular, after a single observation $Y_{1}=y_{1}$,
\begin{equation}\label{eq:BGoneahead}
\E{Y_{2}\mid Y_{1}=y_{1}}
=
\begin{cases}
\dfrac{a}{a+b+1}\,\Bigl(1+\dfrac{\alpha}{\beta}\Bigr), & y_{1}=0,\\[10pt]
\dfrac{a+1}{a+b+1}\,\Bigl(1+\dfrac{\alpha+(y_{1}-1)}{\beta+1}\Bigr), & y_{1}>0.
\end{cases}
\end{equation}
\end{corollary}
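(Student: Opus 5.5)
The approach is to compute the likelihood of the history given $\Theta$ and then read off the posterior by conjugacy. Start by writing the conditional likelihood of a single observation $y_s$ given $\Theta=(\theta_1,\theta_2)$. If $y_s=0$, this is equivalent to $Z_s=0$, which has probability $1-\theta_1$. If $y_s>0$, this is equivalent to $Z_s=1$ and $N_s=y_s-1$. Since $Z_s\perp N_s\mid\Theta$, this has probability
\[
\theta_1\,e^{-\theta_2}\theta_2^{\,y_s-1}/(y_s-1)!.
\]
By conditional independence across $s$ in Model~\ref{mod.0}(i), the likelihood of $y_{1:t}$ is the product of these factors. Collecting exponents gives
\[
f(y_{1:t}\mid\theta_1,\theta_2)\;\propto\;\theta_1^{r_t}(1-\theta_1)^{t-r_t}\cdot\theta_2^{m_t}e^{-r_t\theta_2},
\]
where the proportionality constant $\prod_{s:y_s>0}1/(y_s-1)!$ does not depend on $\theta$.

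Next, multiply by the independent priors $\theta_1^{a-1}(1-\theta_1)^{b-1}$ and $\theta_2^{\alpha-1}e^{-\beta\theta_2}$. The posterior kernel factorizes into a $\theta_1$ part and a $\theta_2$ part. The $\theta_1$ part is a Beta$(a+r_t,\,b+t-r_t)$ kernel and the $\theta_2$ part is a Gamma$(\alpha+m_t,\,\beta+r_t)$ kernel. Normalizing each separately gives the stated posterior with its updated parameters; the case $t=0$ (empty sums) simply returns the prior.

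For the predictive mean, condition on $\Theta$ first. We have $Y_{t+1}=Z_{t+1}(1+N_{t+1})$, and conditionally on $\Theta$ the pair $(Z_{t+1},N_{t+1})$ is independent of $Y_{1:t}$ and $Z_{t+1}\perp N_{t+1}$. Hence
\[
\E{Y_{t+1}\mid\Theta,y_{1:t}}=\Theta_1(1+\Theta_2).
\]
Since $\Theta_1$ and $\Theta_2$ are independent under the posterior, taking the posterior expectation gives $\E{\Theta_1\mid y_{1:t}}\bigl(1+\E{\Theta_2\mid y_{1:t}}\bigr)$. The Beta and Gamma means are $a^*_t/(a^*_t+b^*_t)=(a+r_t)/(a+b+t)$ and $\alpha^*_t/\beta^*_t$, which yields the stated formula.

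The one-observation case \eqref{eq:BGoneahead} follows by setting $t=1$. If $y_1=0$, then $r_1=m_1=0$; if $y_1>0$, then $r_1=1$ and $m_1=y_1-1$.

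No step is a real obstacle. The only point needing care is justifying that $Y_{t+1}$ is conditionally independent of the history given $\Theta$, so that the tower property applies. This follows directly from Model~\ref{mod.0}(i).
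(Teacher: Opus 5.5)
Your proposal is correct and takes the same approach as the paper, which simply cites Bernoulli--Beta and Poisson--Gamma conjugacy. You spell out that argument in detail: the likelihood factorization, the separate Beta and Gamma posterior kernels, and the predictive mean obtained from posterior independence of $\Theta_1$ and $\Theta_2$.
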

\begin{proof}
  The proof follows from Bernoulli-Beta and Poisson-Gamma conjugacies.
\end{proof}

From Corollary~\ref{prop.1}, we obtain the following result, which provides the necessary and sufficient condition for the base credibility order.

\begin{proposition}\label{prop.2}
The process $\left(Y_t\right)_{t\ge 1}$ in Model~\ref{mod.2} satisfies the base credibility order; i.e.,
		\begin{equation}\label{eq.5}
			\E{Y_{t+1} \mid Y_{1:t}=y_{1:t}}
			\le
			\E{Y_{t+1} \mid Y_{1:t}=y_{1:t}'}, \qquad \forall y_{1:t}\le y_{1:t}'
		\end{equation}
for all $t\ge 1$
 if and only if
 \begin{equation}\label{eq.53}
			\alpha_t^*\,a_t^* \;\le\; \beta_t^*\,(\alpha_t^*+\beta_t^*+1)
\end{equation}
for all $t\ge 1$ (previously it was $t\ge 0$ please check this change is correct), where
\(
\alpha_{t}^*, \beta_{t}^*, a_{t}^*, b_{t}^*>0
\)
are defined in Corollary \ref{prop.1}.
\end{proposition}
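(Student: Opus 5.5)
The plan is to reduce the base credibility order to a one-step comparison of the closed-form predictive mean from Corollary~\ref{prop.1}. For fixed $t$, write the predictive mean as a function of the sufficient statistics,
\[
g_t(r,m):=\frac{a+r}{a+b+t}\Bigl(1+\frac{\alpha+m}{\beta+r}\Bigr),
\]
so that $\E{Y_{t+1}\mid y_{1:t}}=g_t(r_t,m_t)$. Any two histories $y_{1:t}\le y'_{1:t}$ in $\mathbb{Z}_{\ge 0}^t$ are joined by a chain of histories, each obtained from the previous one by raising a single coordinate by one. Hence \eqref{eq.5} is equivalent to monotonicity along such elementary moves. There are two kinds of elementary move.
\begin{itemize}
\item[(a)] Some $y_s\ge 1$ is raised to $y_s+1$. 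Then $r_t$ is unchanged and $m_t$ increases by one. Since $g_t$ is strictly increasing in $m$, this move always increases the predictive mean, with no condition needed.
\item[(b)] Some $y_s=0$ is raised to $1$. Then $m_t$ is unchanged and $r_t$ increases by one. This is the cross-hurdle comparison, the analogue of \eqref{eq.321}.
\end{itemize}

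For move (b), set $A=a+r_t=a_t^*$, $B=\beta+r_t=\beta_t^*$ and $C=\alpha+m_t=\alpha_t^*$, all evaluated at the lower history. The inequality $g_t(r_t+1,m_t)\ge g_t(r_t,m_t)$ becomes
\[
\frac{(A+1)(B+1+C)}{B+1}\ \ge\ \frac{A(B+C)}{B}.
\]
Cross-multiplying by the positive quantity $B(B+1)$ and expanding, the terms $AB^2$, $AB$ and $ABC$ cancel. This leaves
\[
B(B+C+1)-AC\ \ge\ 0,
\]
which is exactly $\alpha_t^*a_t^*\le\beta_t^*(\alpha_t^*+\beta_t^*+1)$, i.e.\ \eqref{eq.53}. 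Because the expansion is an equivalence, the same computation gives both directions of the proposition.
\begin{itemize}
\item \emph{Sufficiency.} If \eqref{eq.53} holds at every state reachable at the lower end of a move (b), every elementary move is nondecreasing. Chaining the moves gives \eqref{eq.5}.
\item \emph{Necessity.} If \eqref{eq.53} fails at some history $y_{1:t}$ containing a zero, raising that zero to $1$ produces $y'_{1:t}\ge y_{1:t}$ with a strictly smaller predictive mean. This violates \eqref{eq.5}.
\end{itemize}

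The algebra is routine; the main obstacle is the bookkeeping of which states $(r_t,m_t)$ the condition \eqref{eq.53} must range over. A move (b) is only available when the lower history has at least one zero, i.e.\ $r_t\le t-1$. So the precise equivalence is with \eqref{eq.53} imposed at all histories with $r_t\le t-1$ and arbitrary $m_t\ge 0$. Equivalently, one can re-index so that the condition is imposed on the state at time $t-1$ followed by a zero, which is presumably the source of the ``$t\ge 0$ versus $t\ge 1$'' ambiguity flagged in the statement. I would state this indexing explicitly before giving the argument.

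I would also note that $m_t$ can be made arbitrarily large while $r_t\ge 1$. The condition is then driven by the coefficient comparison in $C$: as $C\to\infty$ it reduces to $a+r\le \beta+r$, i.e.\ $a\le\beta$, when $r\ge1$. When $r_t=0$ we have $m_t=0$, so only the single case $C=\alpha$ arises, giving $\alpha a\le\beta(\alpha+\beta+1)$. Recording this reduction makes clear which histories are binding and confirms that the necessity direction is realized by concrete histories.
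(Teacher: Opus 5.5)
Your proposal is correct and follows essentially the same route as the paper. The predictive mean from Corollary~\ref{prop.1} is strictly increasing in $m_t$, raising a zero to one is nonnegative exactly under \eqref{eq.53} by the same cross-multiplication, and necessity comes from flipping a single zero to one. Your chaining through histories rather than through $(r_t,m_t)$, and your bookkeeping of which states are binding, are slightly more careful than the paper. Your observation that letting $m_t\to\infty$ with $r_t\ge 1$ collapses the whole family of conditions to $a\le\beta$ is also correct, and it sharpens Corollary~\ref{thm.3}.
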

\begin{proof}
Fix $t\ge 0$.
By Corollary~\ref{prop.1}, for a history $y_{1:t}$ the one–step–ahead predictive mean can be written as
\[
\Phi_t(r_t,m_t)
\;:=\;
\E{Y_{t+1}\mid y_{1:t}}
\;=\;
\frac{a+r_t}{a+b+t}\left(1+\frac{\alpha+m_t}{\beta+r_t}\right),
\]
where non-negative integers $r_t$ and $m_t$ are defined in Corollary \ref{prop.1}.

For fixed non-negative integer $r$, we have
\[
\Phi_t(r_t,m_t+1)-\Phi_t(r_t,m_t)
%=\frac{a+r}{a+b+t}\left(\frac{\alpha+m+1}{B_t}-\frac{\alpha+m}{B_t}\right)
=\frac{a+r_t}{(a+b+t)\,(\beta+r_t)}\;>\;0.
\]
Thus $\Phi_t$ is (strictly) increasing in non-negative integers $m_t$.

Next, for fixed non-negative integer $m_t$, we have
\begin{equation}\label{eq.p3}
\Phi_t(r_t+1,m_t)-\Phi_t(r_t,m_t)
=
\frac{(\beta+r_t)^2+\beta+r_t+(\alpha+m_t)(\beta+r_t)-(\alpha+m_t) (a+r_t)}{(a+b+t)\,(\beta+r_t)(\beta+r_t+1)}.
\end{equation}
Since the denominator is positive, this difference is nonnegative if and only if
\begin{equation}\label{eq.p4}
(\alpha+m_t) (a+r_t) \;\le\; (\beta+r_t)((\alpha+m_t)+(\beta+r_t)+1)
%\quad\text{i.e.}\quad
%\alpha_t^*\,a_t^* \;\le\; \beta_t^*\,(\alpha_t^*+\beta_t^*+1).
\end{equation}
which is exactly the inequality in \eqref{eq.53}.

Hence, if we suppose that inequality in \eqref{eq.53} holds for all $t\ge 1$, then $\Phi_t$ is nondecreasing in both arguments.
Because $y_{1:t}\le y'_{1:t}$ implies $r_t\le r_t'$ and $m_t\le m_t'$, we obtain
\[
\E{Y_{t+1}\mid y_{1:t}}=\Phi_t(r_t,m_t)\;\le\;\Phi_t(r_t',m_t')=\E{Y_{t+1}\mid y'_{1:t}},
\]
which is \eqref{eq.5}.

Conversely, suppose \eqref{eq.5} holds for all $t$ and all histories.
For fixed $t\ge 1$, consider the claim history $Y_{1:t}=y_{1:t}$ with at least one of $y_1, \cdots, y_t$ are zero.
% any feasible pair of non-negative integers, $(r_t,m_t)$ for given non-negative integers $Y_{1:t}=y_{1:t}$.
Choose another claim history that are identical with $Y_{1:t}=y_{1:t}$ except that one component changes from $0$ to $1$; then $r_t$ increases to $r_t+1$ while $m_t$ is unchanged.
Applying \eqref{eq.5} to this pair yields $\Phi_t(r_t,m_t)\le \Phi_t(r_t{+}1,m_t)$, which, by the equivalence between in \eqref{eq.p3} and \eqref{eq.p4}, is equivalent to
\[
\alpha_t^*\,a_t^* \;\le\; \beta_t^*\,(\alpha_t^*+\beta_t^*+1).
\]
As $(r_t,m_t)$, hence $(\alpha_t^*,\beta_t^*,a_t^*)$, and $t$ were arbitrary, the condition must hold for all $t\ge 1$.
\end{proof}

While Model~\ref{mod.2} yields convenient closed-form posteriors, the base credibility order does not hold universally;
it is governed precisely by the necessary and sufficient condition~\eqref{eq.53} in Proposition~\ref{prop.2}.
Since this condition must be verified for all $t\ge 1$, it may be cumbersome to check in practice.
To overcome this difficulty, the following corollary presents a sufficient condition that is simple to verify.
Moreover, this condition can be imposed directly at the model specification stage, ensuring that the resulting model
automatically satisfies the base credibility order.

	\begin{corollary}\label{thm.3} Consider the setting in Model~\ref{mod.2}.  If
		\begin{equation}\label{ass.1}
			a < \beta,
		\end{equation}
		then, for any \(t \in \mathbb{Z}_{\ge 1}\) and any histories \(y_{1:t},\,y'_{1:t} \in \mathbb{Z}_{\ge 0}^t\) with \(y_{1:t} \le y'_{1:t}\), we have
		\begin{equation}\label{eq.51}
			\E{Y_{t+1}\mid Y_{1:t}=y_{1:t}}
			\;\le\;
			\E{Y_{t+1}\mid Y_{1:t}=y_{1:t}'}.
		\end{equation}
	\end{corollary}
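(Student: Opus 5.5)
The plan is to reduce Corollary~\ref{thm.3} to the sufficient direction of Proposition~\ref{prop.2}. Concretely, I will show that \eqref{ass.1} forces \eqref{eq.53} for every $t\ge 1$ and for every admissible pair $(r_t,m_t)$ of nonnegative integers, whatever the history. By Corollary~\ref{prop.1} the updated parameters are $a_t^*=a+r_t$, $\beta_t^*=\beta+r_t$ and $\alpha_t^*=\alpha+m_t$.

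The key observation is that $a_t^*$ and $\beta_t^*$ are shifted by the same amount $r_t$. Hence $a<\beta$ gives $a_t^*=a+r_t<\beta+r_t=\beta_t^*$ for all $r_t\ge 0$. Multiplying by $\alpha_t^*>0$ and using $\alpha_t^*<\alpha_t^*+\beta_t^*+1$ together with $\beta_t^*>0$, I will obtain the chain
\[
\alpha_t^*a_t^* \;<\; \alpha_t^*\beta_t^* \;<\; \beta_t^*(\alpha_t^*+\beta_t^*+1),
\]
which is \eqref{eq.53}, in fact strictly.

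I will then invoke the sufficiency part of the proof of Proposition~\ref{prop.2}. There, for each fixed $t$, $\Phi_t(r,m)$ is increasing in $m$ unconditionally, and it is nondecreasing in $r$ at any point $(r,m)$ where \eqref{eq.p4} holds. The one point needing care is that \eqref{eq.p4} must hold not only at the observed $(r_t,m_t)$ but at every intermediate lattice point on a monotone path from $(r_t,m_t)$ to $(r_t',m_t')$. The displayed inequality above holds for \emph{all} nonnegative integers $r,m$, since neither $a<\beta$ nor the chain depends on which values are realized. Therefore $\Phi_t$ is coordinatewise nondecreasing on all of $\mathbb{Z}_{\ge0}^2$.

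Finally, $y_{1:t}\le y_{1:t}'$ implies $r_t\le r_t'$ and $m_t\le m_t'$, as noted in that proof. Moving first in $m$ and then in $r$ gives \eqref{eq.51}. There is no real obstacle: the only step requiring attention is to state explicitly that the sufficient condition is verified uniformly in $(r,m)$ rather than only at realized histories.
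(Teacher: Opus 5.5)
Your proposal is correct and follows the paper's route. The paper's proof simply asserts that $a<\beta$ implies $\alpha_t^*a_t^*\le\beta_t^*(\alpha_t^*+\beta_t^*+1)$ for all $t\ge1$ and then invokes the sufficiency direction of Proposition~\ref{prop.2}. Your chain $\alpha_t^*a_t^*<\alpha_t^*\beta_t^*<\beta_t^*(\alpha_t^*+\beta_t^*+1)$ supplies the step the paper leaves unstated. Your remark that this holds uniformly for all $(r,m)\in\mathbb{Z}_{\ge0}^2$ correctly justifies monotonicity of $\Phi_t$ along intermediate lattice points, including points that are not feasible histories.
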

	\begin{proof}
First, note that the condition in \eqref{ass.1} implies
\[
			\alpha_{t}^*\,a_{t}^* \;\le\; \beta_{t}^*\,(\alpha_{t}^*+\beta_{t}^*+1), \quad \hbox{for all}\quad t\ge 1
\]
where
\[
\alpha_{t}^*,\quad \beta_{t}^*,\quad a_{t}^*,\quad b_{t}^*
\]
are defined in Corollary \ref{prop.1}. Then Proposition \ref{prop.2} concludes the proof.

	\end{proof}

While the condition in \eqref{eq.53} or \eqref{ass.1} guarantees the base credibility order,
it does not ensure that the general credibility order is satisfied.
The following numerical example demonstrates that the general credibility order can indeed be violated even under these conditions.
\begin{example}\label{ex.1}
Under Model~\ref{mod.2}, we fix $a=0.5$, $b=1$, $\alpha=1$, and $\beta=1$, so that the base credibility order is satisfied.
We compare the deductible-adjusted conditional expectations
\[
\E{(Y_2-d)^+\,\middle|\,Y_1=0}
\quad\text{and}\quad
\E{(Y_2-d)^+\,\middle|\,Y_1=1}
\]
for various values of the deductible $d$, as reported in Table~\ref{table_Model3_MGCO}.
The results show that, as $d$ increases, violations of the general credibility order become more pronounced.
\end{example}

\begin{table}[h!]
\centering
\caption{Comparison of $\mathbb{E}\!\left[(Y_2-d)^+\,\middle|\,Y_1\right]$ across deductibles $d$}
\label{table_Model3_MGCO}
\begin{tabular}{rccccccccc}
\toprule
 & \multicolumn{9}{c}{d} \\
\cmidrule(lr){2-10}
  & 1 & 2 & 3 & 4 & 5 & 6 & 7 & 8 & 9 \\
\midrule
$\mathbb{E}\!\left[(Y_2-d)^+\,\middle|\,Y_1{=}0\right]$ & 0.200 & 0.100 & 0.050 & 0.025 & 0.013 & 0.006 & 0.003 & 0.002 & 0.001 \\
$\mathbb{E}\!\left[(Y_2-d)^+\,\middle|\,Y_1{=}1\right]$ & 0.300 & 0.100 & 0.033 & 0.011 & 0.004 & 0.001 & 0.000 & 0.000 & 0.000 \\
\bottomrule
\end{tabular}
\end{table}

	% \begin{example}\label{ex.1}
	% 	{\color{red}(Jaeyoun: Please expand the setting of this example as in Table \ref{tab:sigma1} for various settings, and show that some orders of deductible/stop-loss insurance can be violated. This need to be in-line with the new result of deductible orders to be added by Hyemin.)}
		
	% \end{example}

\section{Poisson–hurdle mixture model with comonotonic random effects}\label{sec.5}

In the previous section we showed that under Model~\ref{mod.2}, which assumes independent random effects, the
base credibility order in \eqref{eq.11} may hold depending on the specification of the random‐effects
distributions, whereas the general credibility order in \eqref{eq.12} does not hold in general.
We now consider a Poisson--hurdle mixture with comonotonic random effects.
Analogous to the simplification afforded by independence in Model~\ref{mod.2},
comonotonicity provides a simple dependence structure that enables easier analysis of credibility order.
We begin with the model specification under comonotonic random effects.

	\begin{model}\label{mod.05}
For the given exogenous
\[
(c_t)_{t\ge 1}\subset \Real \quad\hbox{and}\quad (d_t)_{t\ge 1}\subset \Real,
\]
the \textbf{Poisson–hurdle mixture model with comonotonic random effects}
for $(Y_t)_{t\ge 1}$ is given by Model~\ref{mod.0} with the following specifics:
		\begin{itemize}
\item[i.] \textbf{Observation distribution.} The conditional distributions in \eqref{eq.d.1} is specified by
\[
			Z_{t}\mid\Theta \;\sim\; \operatorname{Ber}\!\left(\sigma(c_t+\Theta)\right),
\qquad
			N_{t}\mid\Theta \;\sim\; \operatorname{Pois}\!\left(\lambda\left(d_t+\Theta\right)\right),
\]
for a differentiable, nondecreasing function $\lambda:\Real\to\Real_{>0}$.

\item[ii.] \textbf{Random effects distribution.}
Let $\Theta$ be a real-valued random variable with distribution $\mathcal{L}(\Theta)$
supported on $\mathcal{R}(\Theta)\subseteq\mathbb{R}$.

%
%\item[ii.] \textbf{Random effects distribution.}
%Assume
%\[
%\Theta\sim \operatorname{N}(0, \kappa^2).
%\]
%for some constant $\kappa^2>0$.
		\end{itemize}

	\end{model}

\begin{lemma}\label{lem:TP2-time-s}
Consider the setting of  Model~\ref{mod.05}.
For a fixed time $t\in\mathbb{Z}_{\ge 1}$, define
\[
\eta_t(y,\theta)
:=\mathbb P(Y_t=y\mid \theta)
=
\begin{cases}
1-\sigma(c_t+\theta), & y=0;\\
\sigma(c_t+\theta)\,e^{- \lambda(d_t+\theta)}
\dfrac{\big(\lambda(d_t+\theta)\big)^{\,y-1}}{(y-1)!}, & y\ge 1;
\end{cases}
\]
where
$\lambda(\theta):=e^{\psi(\theta)}>0$.
Then $\eta_t$ is TP$_2$ in $(y,\theta)$ if and only if
\[
0\;\le\;\frac{\partial}{\partial \theta} \lambda(d_t+\theta)\;\le\;1
\]
for $d_t\in\Real$ and all $\theta$ on the support of $\Theta$.
%Then $\eta_t$ is TP$_2$ in $(y,\theta)$.
\end{lemma}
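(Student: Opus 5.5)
The plan is to reduce TP$_2$ of $\eta_t$ in $(y,\theta)$ to monotonicity in $\theta$ of consecutive likelihood ratios, and then read off the two derivative conditions from the two kinds of ratio.

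\textbf{Step 1: reduction to consecutive ratios.} Since $\sigma(\cdot)\in(0,1)$ and $\lambda>0$, we have $\eta_t(y,\theta)>0$ for all $y\in\mathbb{Z}_{\ge 0}$ and all $\theta$. The TP$_2$ inequality $\eta_t(y,\theta)\eta_t(y',\theta')\ge \eta_t(y',\theta)\eta_t(y,\theta')$ for $y\le y'$, $\theta\le\theta'$ is then equivalent to
\[
\theta\mapsto \frac{\eta_t(y',\theta)}{\eta_t(y,\theta)}
\]
being nondecreasing for every $y<y'$. Write this ratio as the telescoping product $\prod_{k=y}^{y'-1}\eta_t(k+1,\theta)/\eta_t(k,\theta)$ of positive factors. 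A product of positive nondecreasing functions is nondecreasing, and conversely the case $y'=y+1$ is itself a consecutive ratio. Hence TP$_2$ holds if and only if $R_y(\theta):=\eta_t(y+1,\theta)/\eta_t(y,\theta)$ is nondecreasing in $\theta$ for every $y\ge 0$.

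\textbf{Step 2: compute the ratios.}
\begin{itemize}
\item[(a)] For $y=0$, using $\sigma(x)/(1-\sigma(x))=e^{x}$,
\[
R_0(\theta)=\frac{\sigma(c_t+\theta)\,e^{-\lambda(d_t+\theta)}}{1-\sigma(c_t+\theta)}=\exp\!\big(c_t+\theta-\lambda(d_t+\theta)\big).
\]
This is nondecreasing if and only if $g(\theta):=\theta-\lambda(d_t+\theta)$ is nondecreasing.
\item[(b)] For $y\ge 1$, the factor $\sigma(c_t+\theta)e^{-\lambda(d_t+\theta)}$ cancels, leaving
\[
R_y(\theta)=\frac{\lambda(d_t+\theta)}{y}.
\]
This is nondecreasing if and only if $\theta\mapsto\lambda(d_t+\theta)$ is nondecreasing.
\end{itemize}

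\textbf{Step 3: translate monotonicity into derivative bounds.} Both $g$ and $\lambda(d_t+\cdot)$ are differentiable. On an interval, a differentiable function is nondecreasing if and only if its derivative is nonnegative everywhere. The "if" direction follows from the mean value theorem. For the "only if" direction, a negative derivative at some $\theta_0$ forces $f(\theta_0+h)<f(\theta_0)$ for small $h>0$.
\begin{itemize}
\item Ratio (a) gives $g'(\theta)=1-\frac{\partial}{\partial\theta}\lambda(d_t+\theta)\ge 0$, which is the upper bound $\le 1$.
\item Ratio (b) gives the lower bound $\frac{\partial}{\partial\theta}\lambda(d_t+\theta)\ge 0$.
\end{itemize}
Combining with Step 1 yields the stated equivalence.

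The main point requiring care is this last step. For the "only if" direction we need the support of $\Theta$ to be an interval, or else we read the derivative condition as holding on the convex hull of the support. Otherwise, monotonicity on a disconnected support does not force a pointwise derivative bound. I would state this interpretation explicitly (as is implicit when the lemma speaks of $\theta$ "on the support"). The remaining steps are routine cancellations.
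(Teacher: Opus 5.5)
Your proof is correct and follows the paper's own argument. Both reduce TP$_2$ to monotonicity in $\theta$ of the consecutive ratios $\eta_t(y+1,\theta)/\eta_t(y,\theta)$, which equal $\exp\bigl(c_t+\theta-\lambda(d_t+\theta)\bigr)$ for $y=0$ and $\lambda(d_t+\theta)/y$ for $y\ge 1$, and read off the upper bound from the first and the lower bound from the second. Your explicit proof of the converse of the ratio reduction, and your caveat that the ``only if'' direction needs the support of $\Theta$ to be an interval (or the condition read on its convex hull), are refinements the paper leaves implicit.
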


\begin{proof}
To show that $\eta_t(y, \theta)$ is TP$_2$ in $(y, \theta)$, it is enough to show
\[
\frac{\eta_t(y+1, \theta)}{\eta_t(y, \theta)}
\]
are non-decreasing function of $\theta\in\mathcal{R}(\Theta)$ for all $y\in\mathbb{Z}_{\ge 0}$.

When $y=0$, we have
\[
\frac{\eta_t(y+1, \theta)}{\eta_t(y, \theta)}={\rm exp}\left(c_t + \theta  - \lambda(d_t+\theta) \right)
\]
and
\begin{equation}\label{eq.pf.1}
\frac{\partial}{\partial \theta} \frac{\eta_t(y+1, \theta)}{\eta_t(y, \theta)} ={\rm exp}\left(c_t + \theta  - \lambda(d_t+ \theta) \right)\left( 1-\frac{\partial}{\partial \theta} \lambda(d_t+\theta)\right).
\end{equation}
When $y\ge 1$, we have
\[
\frac{\eta_t(y+1, \theta)}{\eta_t(y, \theta)}=\frac{\lambda(d_t+\theta)}{y}
\]
and
\begin{equation}\label{eq.pf.2}
\frac{\partial}{\partial \theta} \frac{\eta_t(y+1, \theta)}{\eta_t(y, \theta)} = \frac{\frac{\partial}{\partial \theta} \lambda(d_t+\theta)}{y}.
\end{equation}

%From the following observation
%\[
%\frac{\partial}{\partial \theta} \lambda(d_t+\theta)=\frac{e^{d_t+\theta}}{1+e^{d_t+\theta}}\in(0,1)
%\]
Finally combining the inequality in \eqref{eq.pf.1} and \eqref{eq.pf.2}, we have
\[
\frac{\partial}{\partial \theta} \frac{\eta_t(y+1, \theta)}{\eta_t(y, \theta)}\ge 0
\]
for all $y\in\mathbb{Z}_{\ge 0}$ and all $\theta\in\mathcal{R}(\Theta)$ if and only if
\[
0\;\le\;\frac{\partial}{\partial \theta_1} \lambda(d_t+\theta)\;\le\;1
\]
for $d_t\in\Real$ and all $\theta\in\mathcal{R}(\Theta)$.
\end{proof}

\begin{theorem}\label{thm.2}
Consider the setting of  Model~\ref{mod.05}.
For given $d_t\in\Real$, suppose
\begin{equation}\label{eq:scaled-Lip}
0\;\le\;\frac{\partial}{\partial \theta} \lambda(d_t+\theta)\;\le\;1, \qquad \forall\theta\in\mathcal{R}(\Theta).
\end{equation}
%Assume
%\[
%\mu_{\max}:=\sup_s \mu_s<\infty.
%\]
%Suppose that on the support of $\Theta_1$ we have
%\begin{equation}\label{eq:scaled-Lip}
%0\;\le\;\mu_{\max}\,\frac{\partial}{\partial \theta_1} \lambda(\theta_1)\;\le\;1
%\end{equation}
%where $\lambda(\theta_1):=e^{\psi(\theta_1)}>0$.
Then, for any \(t \in \mathbb{Z}_{\ge 1}\) and any histories \(y_{1:t},\,y'_{1:t} \in \mathbb{Z}_{\ge 0}^t\) with \(y_{1:t} \le y'_{1:t}\), we have
\[
\bigl[\,Y_{t+1}\mid Y_{1:t}=y_{1:t}\,\bigr]
\;\le_{\mathrm{LR}}\;
\bigl[\,Y_{t+1}\mid Y_{1:t}=y'_{1:t}\,\bigr].
\]

\end{theorem}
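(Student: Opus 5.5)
The plan is to reduce the statement to the MTP$_2$ machinery of Section~\ref{sec.2}, applied to the vector $Y_{1:t+1}$.

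\textbf{Step 1: TP$_2$ kernels.} Conditional on $\Theta=\theta$, the variables $Y_1,\dots,Y_{t+1}$ are independent. This holds because the pairs $(Z_s,N_s)$ are conditionally independent across $s$. The conditional mass function of $Y_s$ is exactly the kernel $\eta_s(y,\theta)$ of Lemma~\ref{lem:TP2-time-s}. Under assumption \eqref{eq:scaled-Lip}, that lemma shows each $\eta_s(y,\theta)$ is TP$_2$ in $(y,\theta)$.

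\emph{Caveat on time indices.} The hypothesis is stated for a given $d_t$. We need it for every index $s=1,\dots,t+1$, so we read \eqref{eq:scaled-Lip} as holding for each $d_s$ that appears.

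\textbf{Step 2: MTP$_2$ of the joint law.} The joint distribution of $Y_{1:t+1}$ is the mixture $\int\prod_{s=1}^{t+1}\eta_s(y_s,\theta)\,{\rm d}\mathcal{L}(\Theta)(\theta)$. This has precisely the form in Lemma~\ref{lem.0}, with $X_j=Y_j$ and $G=\mathcal{L}(\Theta)$. Lemma~\ref{lem.0} therefore gives that $Y_{1:t+1}$ is MTP$_2$ on $\mathbb{Z}_{\ge0}^{t+1}$.

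\textbf{Step 3: Marginalising.} Summing the joint mass function over $y_{t+1}$ simply drops the factor $\eta_{t+1}$, so the law of $Y_{1:t}$ remains a mixture of the same type. Hence $Y_{1:t}$ is MTP$_2$ as well, although this is not strictly needed.

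\textbf{Step 4: Conditioning.} Apply Lemma~\ref{lem.1} to the MTP$_2$ vector $Y_{1:t+1}$ with $i=t+1$. Then $X_{-i}=Y_{1:t}$. Lemma~\ref{lem.1} yields $[Y_{t+1}\mid Y_{1:t}=y_{1:t}]\le_{\rm LR}[Y_{t+1}\mid Y_{1:t}=y'_{1:t}]$ whenever $y_{1:t}\le y'_{1:t}$ and both histories have positive probability.

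Positivity is automatic. Each $\eta_s(y,\theta)>0$ for every $y\in\mathbb{Z}_{\ge0}$, because $\sigma\in(0,1)$ and $\lambda>0$. So every history in $\mathbb{Z}_{\ge0}^t$ lies in the set $S$ of Lemma~\ref{lem.1}.

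\textbf{Main obstacle.} The only delicate point is Step 2. Lemma~\ref{lem.0} is stated for a single latent variable with TP$_2$ kernels, and Model~\ref{mod.05} fits this because the comonotonic structure collapses $(\Theta_1,\Theta_2)$ to one $\Theta$.

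If Lemma~\ref{lem.0} were to be reproved, the route is the basic composition formula. A product of kernels that are each TP$_2$ in $(y_s,\theta)$ is MTP$_2$ in $(y_{1:t+1},\theta)$. Integrating out the totally ordered variable $\theta$ preserves MTP$_2$, by Karlin--Rinott or a direct Cauchy--Binet type argument.

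For the integer lattice, the TP$_2$ check in Lemma~\ref{lem:TP2-time-s} through consecutive ratios $\eta(y+1,\theta)/\eta(y,\theta)$ suffices. Ratios over non-adjacent $y$ are telescoping products of consecutive ratios, each nondecreasing in $\theta$.

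Finally, LR order implies the usual stochastic order. The general credibility order \eqref{eq.12} then follows immediately as a corollary.
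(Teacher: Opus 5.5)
Your proposal is correct and follows the paper's proof step for step. You get TP$_2$ of each kernel from Lemma~\ref{lem:TP2-time-s}, MTP$_2$ of $Y_{1:t+1}$ from Lemma~\ref{lem.0}, and the likelihood ratio ordering from Lemma~\ref{lem.1} with $i=t+1$; your remarks that \eqref{eq:scaled-Lip} must be read as holding for every $d_s$, $s\le t+1$, and that every history has positive probability, make explicit two points the paper's proof leaves implicit.
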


\begin{proof}
By Lemma~\ref{lem:TP2-time-s}, the condition in \eqref{eq:scaled-Lip} implies that the conditional distribution $\P{Y_s = y \mid \theta}$ is TP$_2$ in $(y, \theta)$.
Applying Lemma~\ref{lem.0}, it follows that the joint distribution of
$(Y_1, \ldots, Y_{t+1})$ is MTP$_2$.
Finally, by Lemma~\ref{lem.1}, the MTP$_2$ property implies the desired likelihood ratio ordering:
\[
\bigl[\,Y_{t+1}\mid Y_{1:t}=y_{1:t}\,\bigr]
\;\le_{\mathrm{LR}}\;
\bigl[\,Y_{t+1}\mid Y_{1:t}=y_{1:t}'\,\bigr],
\quad \text{for all } y_{1:t} \le y_{1:t}'.
\]
\end{proof}

\begin{remark}\label{rem.1}
A practical choice of $\lambda:\Real\to\Real$ is the \textbf{softplus} function, defined by
\[
  \lambda(d_t+\theta) = \ln\!\left(1 + e^{d_t+\theta}\right),
\]
which is widely adopted in both statistics and machine learning as a smooth, positive, and numerically stable alternative to the exponential response function; see, for instance, \citet{weiss2022softplus, wiemann2024using}.
One can easily verify that
\[
  \frac{\partial}{\partial \theta} \lambda(d_t+\theta)
  = \frac{ e^{d_t+\theta}}{1 + e^{d_t+\theta}}
  \in (0,1)
\]
for all $d_t\in\Real$ and $\theta \in \Real$.

Regarding the choice of the distribution for the random effect, any specification is admissible since it does not affect the general credibility order established in Theorem~\ref{thm.2}.
As an example, one may consider
\[
\Theta \sim \mathcal{N}(0, \kappa^2)
\]
for some $\kappa^2>0$.
\end{remark}

In \ref{app:ext}, we further consider two extensions of the comonotonic random-effects framework: a Negative Binomial
version of the count component and zero-inflated counterparts.

\section{Empirical study: LGPIF insurance data}\label{sec.6}
To conduct an empirical analysis, we use insurance claim data provided by the Wisconsin Local Government Property Insurance Fund (LGPIF).
The LGPIF offers property insurance to a wide range of public entities, including counties, cities, towns, villages, school districts, and fire departments, as well as organizations classified under a miscellaneous category. The dataset contains claim information and policy characteristics for multiple coverages (collateral) observed from 2006 to 2011. Among these, a subset of variables related to claim information is used in our analysis. In this study, we focus on the LGPIF data restricted to the collision coverage for new and old vehicles.
Collision coverage refers to insurance against losses arising from a vehicle colliding with an object, colliding with another vehicle, or overturning.
For model fitting, we use observations from 2006--2010, and we use the 2011 observations as a validation set to compare the predictive performance of the estimated models.

The 2006--2010 dataset used for model fitting consists of longitudinal observations and includes a total of 1{,}234 local government entities.
We exclude observations for policyholders whose both new-vehicle collision coverage and old-vehicle collision coverage are zeros. As a result, the final longitudinal dataset of $k=497$ entities is used for the analysis.
For a more detailed description of the dataset, see \citet{frees2016multivariate}. As covariates, we use two categorical variables: entity type (six levels: Miscellaneous, City, County, School, Town, and Village) and log coverage (three levels).
The intervals for log coverage are defined as follows: $
\text{coverage 1} \in (0,\;0.17],\
\text{coverage 2} \in (0.17,\;0.75],\
\text{coverage 3} \in (0.75,\;\infty).$ The distributions of entity type and coverage level are summarized in Table~\ref{tab:policy_characteristics_en}. For simplicity, we assume that these covariates do not change over time.

\begin{table}[ht]
\centering
\caption{Characteristics of covariates used in the analysis}
\label{tab:policy_characteristics_en}
\begin{tabular}{llr}
\toprule
\multicolumn{2}{l}{\textbf{Categorical variables}} & \textbf{Proportion (\%)} \\
\midrule
\textbf{Variable} & \textbf{Description} &  \\
\midrule
\multirow{7}{*}{Entity type}
  & Type of local government entity &  \\
  & \quad Miscellaneous & 5.03 \\
  & \quad City          & 9.66 \\
  & \quad County        & 11.47 \\
  & \quad School        & 36.42 \\
  & \quad Town          & 16.90 \\
  & \quad Village       & 20.52 \\
\midrule
\multirow{4}{*}{Coverage level}
  & Collision coverage amount for new and old vehicles &  \\
  & \quad Interval $(0,\;0.14] = 1$        & 33.40 \\
  & \quad Interval $(0.14,\;0.74] = 2$     & 33.20 \\
  & \quad Interval $(0.74,\;\infty) = 3$   & 33.40 \\
\bottomrule
\end{tabular}
\end{table}

\subsection{Model setting}
For comparison, we introduce the following benchmark models (BMs):
\begin{itemize}
  % \item (Model 5) Poisson zero-inflated model with Gaussian random effects: identical to Model~\ref{mod.1} except that we replace the observation equation by $Y_t := Z_t N_t$.

  \item (BM 1) Poisson GLMM:
  \[
  Y_{i,t}\mid R_i,\mathbf{x}_{i} \;\stackrel{\text{iid}}{\sim}\; \mathrm{Pois}\bigr(\lambda_{i}\text{exp}(R_i)\bigr),
  \]
  where $\lambda_{i}=\exp(\langle \mathbf{x}_{i},\boldsymbol\beta \rangle)$, and the random effect prior is assumed to be $R_i\sim N\left(-\frac{1}{2} d^2, d^2\right)$.
  % {\color{red}(Jaeyoun: We failed to have better predictive MSE over this model. It should be good to test whether
  % \[
  % Y_{i,t}\mid R_i,\mathbf{x}_{i} \;\stackrel{\text{iid}}{\sim}\; \mathrm{Pois}(\lambda_{i}\operatorname{exp}{R_i}),
  % \]
  % with $R_i\sim N\left(-\frac{1}{2} d^2, d^2\right)$.
  % )}

  \item (BM 2) Poisson GLM:
  \[
  Y_i\mid \mathbf{x}_i \;\stackrel{\text{iid}}{\sim}\; \mathrm{Pois}(\lambda_i),
  \]
  where $\lambda_i=\exp(\langle \mathbf{x}_{i},\boldsymbol\beta \rangle)$.

  \item (BM 3) Poisson hurdle model:
  \[
  Y_i \mid \mathbf{x}_i \;\stackrel{\text{iid}}{\sim}\; I_i\,Y_i^+,
  \]
  where $I_i \sim \mathrm{Ber}(\pi_i)$ and $Y_i^+ \sim \mathrm{TPois}(\lambda_i)$ with $
  \pi_i = \sigma\!\big(\langle \mathbf{x}_i,\boldsymbol\beta_1\rangle\big),
  \
  \lambda_i = \exp\!\big(\langle \mathbf{x}_i,\boldsymbol\beta_2\rangle\big).
  $

  \item (BM 4) Poisson zero-inflated model:
  \[
  Y_i\mid \mathbf{x}_i \;\stackrel{\text{iid}}{\sim}\; I_i\,Y_i^*,
  \]
  where $I_i \sim \mathrm{Ber}(\pi_i)$ and $Y_i^* \sim \mathrm{Pois}(\lambda_i)$, with $
  \pi_i = \sigma\!\big(\langle \mathbf{x}_{i},\boldsymbol\beta_1 \rangle\big),
  \
  \lambda_i=\exp\!\big(\langle \mathbf{x}_{i},\boldsymbol\beta_2 \rangle\big).
  $
\end{itemize}

Now, we discuss the incorporation of the covariates ($\mathbf{x}_i\in\Real^{p+1}$) into Model \ref{mod.1}, \ref{mod.2}, and \ref{mod.05}.
Although these models are formulated for a single policyholder, we introduce an additional subscript
$i=1,\ldots,k$ to represent the $i$-th policyholder when applying the models to cross-sectional or
panel data.
For Model~\ref{mod.1}, we set
\[
\mu_{1,i} = \langle \mathbf{x}_{i},\boldsymbol\beta_1 \rangle,
\qquad
\mu_{2,i} = \langle \mathbf{x}_{i},\boldsymbol\beta_2 \rangle,
\]
where $\boldsymbol\beta_1,\boldsymbol\beta_2 \in \mathbb{R}^{p+1}$.

For Model~\ref{mod.2}, we incorporate covariates into selected hyperparameters of the random-effect prior by setting
\[
b_i = \exp\!\big(\langle \mathbf{x}_i,\boldsymbol\beta_b\rangle\big),
\qquad
\alpha_i = \exp\!\big(\langle \mathbf{x}_i,\boldsymbol\beta_\alpha\rangle\big),
\]
where $\boldsymbol\beta_b, \boldsymbol\beta_\alpha \in \mathbb{R}^{(p+1)\times 1}$.
Due to convenient conjugacy property, the Poisson–Gamma random–effects model is widely used in insurance applications. In this context, the a priori rate, specific to each policyholder, is employed to capture observable heterogeneity through explanatory covariates (not to be confused with the unobserved heterogeneity modeled via random effects). In the standard Poisson–Gamma formulation, the canonical specification is
\[
N_{i,t}\,\overset{\mathrm{iid}}{\sim}\,\operatorname{Pois}(\lambda_i\,\Theta_{i,2}),
\qquad
\Theta_{i,2}\sim\operatorname{Gamma}(\alpha,\beta).
\]
Since $\lambda_i\Theta^{[2]}_i\sim \operatorname{Gamma}\!\bigl(\alpha,\,\beta/\lambda_i\bigr)$, this specification is equivalent to
\[
N_{i,t}\,\overset{\mathrm{iid}}{\sim}\,\operatorname{Pois}(\Theta_{i,2}^*),
\qquad
\Theta_{i,2}^*\sim\operatorname{Gamma}(\alpha,\beta/\lambda_i).
\]
Thus, the covariates enter through the rate parameter of the Gamma distribution.
By contrast, the covariates in Model~\ref{mod.2} are attached to the shape parameter $\alpha_i$ and to the Beta prior through $b_i$.
This choice greatly simplifies the monotonicity analysis of the predictive mean, as shown in Section \ref{sec.4}.

For Model~\ref{mod.05}, we take $\lambda$ to be the softplus function introduced in Remark~1.
The conditional distributions are specified as
\[
Z_{i,t}\mid \Theta \ \stackrel{\mathrm{iid}}{\sim}\ \mathrm{Ber}\!\big(\sigma(c_i+\Theta_i)\big),
\qquad
N_{i,t}\mid \Theta \ \stackrel{\mathrm{iid}}{\sim}\ \mathrm{Pois}\!\big(\lambda(d_i+\Theta_i)\big),
\]
and the covariates are incorporated through
\[
c_i = \langle \mathbf{x}_i, \boldsymbol\beta_c \rangle,
\qquad
d_i = \langle \mathbf{x}_i, \boldsymbol\beta_d \rangle .
\]
For Model~\ref{mod.1}, Model~\ref{mod.2}, Model~\ref{mod.05}, and BM~1, closed-form likelihood expressions are not available, making classical maximum likelihood estimation difficult.
Posterior inference is therefore carried out using Markov Chain Monte Carlo methods implemented in the \texttt{nimble} package in \texttt{R}, which provides a flexible framework for specifying and fitting hierarchical Bayesian models \citep{de2017programming, nimble}.

%For Model~\ref{mod.1}, Model~\ref{mod.2}, Model~\ref{mod.05}, and BM~1, closed-form likelihood expressions are not available, making classical maximum likelihood estimation difficult.
%We therefore adopt a Bayesian framework for estimation (this sentence can be safely removed. Isn't it?).
%Alternatively, the posterior inference is carried out using Markov chain Monte Carlo methods implemented in the \texttt{nimble} package in \textsf{R} (Do you want to include two sentences to briefly explain nimble?).

\subsection{Numerical investigation of the credibility order}

Before evaluating predictive performance, we first examine whether each fitted model respects the
credibility order.
For some models, the base and/or general credibility orders are guaranteed by construction through
explicit constraints on the model specification.
For BM~1, \citet{purcaru2003dependence} established the
general credibility order. Model~\ref{mod.2} is fitted
under the sufficient condition in Corollary~\ref{thm.3}, which guarantees the base credibility order while the general credibility order is not guaranteed.
For Model~\ref{mod.05}, adopting the softplus specification of $\lambda$ in Remark~\ref{rem.1} satisfies
the condition of Theorem~\ref{thm.2}, and therefore the general credibility order is ensured.
In contrast, BM~2, BM~3, and BM~4 do not include random effects. As a result, their predictive distributions do
not depend on past claim histories beyond the observed covariates. Consequently, the credibility-order comparisons considered
below are trivially satisfied and do not provide meaningful diagnostics for these benchmark models.

For the remaining models, the base and general credibility orders are assessed numerically.
To empirically check the base credibility order, we fix $t=5$ and, for each policyholder $i$, compare
the one-step-ahead posterior predictive means under the two counterfactual last-period outcomes
$Y_{i,5}=0$ and $Y_{i,5}=1$, while holding the earlier observed history $Y_{i,1:4}=y_{i,1:4}$ fixed:
\[
\E{Y_{i,6}\,\middle|\,Y_{i,1:4}=y_{i,1:4},\,Y_{i,5}=0}
\;\le\;
\E{Y_{i,6}\,\middle|\,Y_{i,1:4}=y_{i,1:4},\,Y_{i,5}=1}.
\]
Table~\ref{tab.1} reports the violation rate, defined as the proportion of policyholders for which the
above inequality fails.

%To assess the general credibility order, we conduct analogous comparisons using two standard
%nondecreasing transforms: the deductible (stop-loss) transform $(\cdot-d)^+$ and the limited-loss
%transform $(\cdot\wedge d)$.

%We first examine the base credibility order.
Under Model~\ref{mod.1}, the violation rate is observed to be as high as $9.29\%$.
That is, there are cases in which an individual with a smaller past claim history has a higher expected future claim count than an individual with a larger past claim history.
Such a reversal phenomenon is at an unacceptable level in insurance practice.

\begin{table}[ht]
\centering
\caption{Proportions of monotonicity violations under each credibility order by model}
\label{tab.1}
%\footnotesize
\begin{tabular}{lccccc}
\toprule
 &  & \multicolumn{2}{c}{$(Y-d)^+$} & \multicolumn{2}{c}{$(Y \wedge d)$} \\
 \cmidrule(lr){3-4}\cmidrule(lr){5-6}
 & Base ($Y$) & $d=1$ & $d=2$ & $d=1$ & $d=2$ \\
\midrule
(Model 2) Gaussian Poisson-hurdle   & 0.0929 & 0.1076 & 0.1272 & 0      & 0.0098 \\
(Model 3) Independent Poisson-hurdle & 0      & 0.2421 & 1.0000 & 0      & 0      \\
(Model 4) Comonotonic Poisson-hurdle & 0      & 0      & 0      & 0      & 0      \\
\bottomrule
\end{tabular}
\end{table}

To assess the general credibility order, we conduct analogous comparisons using two standard
nondecreasing loss transformations commonly used in insurance pricing, corresponding to deductible
insurance and policy limit insurance.
Specifically, for each policyholder $i$ (again at $t=5$), we check whether
\[
\E{(Y_{i,6}-d)^+ \,\middle|\, Y_{i,1:4}=y_{i,1:4},\, Y_{i,5}=0}
\;\le\;
\E{(Y_{i,6}-d)^+ \,\middle|\, Y_{i,1:4}=y_{i,1:4},\, Y_{i,5}=1},
\]
and
\[
\E{Y_{i,6}\wedge d \,\middle|\, Y_{i,1:4}=y_{i,1:4},\, Y_{i,5}=0}
\;\le\;
\E{Y_{i,6}\wedge d \,\middle|\, Y_{i,1:4}=y_{i,1:4},\, Y_{i,5}=1}.
\]
Table~\ref{tab.1} summarizes the corresponding violation rates across policyholders and across
different deductible and policy limit levels.

For the posterior expectation of the deductible-transformed claim count, $(Y_6-d)^+$, the reversal rate under Model~\ref{mod.1} increases as the deductible $d$ becomes larger.
However, it is still lower than the corresponding reversal rate under Model~\ref{mod.2}.
Although the base credibility order is guaranteed for Model~\ref{mod.2}, the reversal rate for the posterior expectation of $(Y_6-d)^+$ is already very high at about $24\%$ when $d=1$, and reversals are observed in all cases when $d=2$. Meanwhile, for the posterior expectation of the capped claim amount, $(Y_6 \wedge d)$, reversal phenomena are observed only under Model~\ref{mod.1}.

\subsection{Out-of-sample validation}

According to the out-of-sample validation results for the LGPIF dataset reported in
Table~\ref{tab.2}, models with random effects generally exhibit superior predictive performance.
We define the predictive mean squared error (MSE) as
\[
\mathrm{MSE}
:= \frac{1}{k}\sum_{i=1}^{k}\bigl(y_{i,6}-\hat y_{i,6}\bigr)^2,
% \qquad
% \mathrm{MAE}
% := \frac{1}{k}\sum_{i=1}^{k}\bigl|y_{i,6}-\hat y_{i,6}\bigr|,
\]
where $\hat y_{i,6}$ denotes the predictive mean
\[
\hat y_{i,6}
:= \hat{\mathbb{E}}\!\left[Y_{i,6}\,\mid\,y_{i,1:5}\right]
\]
implied by the fitted model. Here, $\hat{\mathbb{E}}[\cdot\mid y_{i,1:5}]$ denotes the estimated predictive expectation,
computed using the fitted model parameters.

Among the random-effects models, Model~\ref{mod.05} achieves the lowest MSE and, at the same time,
is the only specification that guarantees both the base and the general credibility orders.
Model~\ref{mod.2} (independent random effects) also attains a low MSE and guarantees the base credibility order,
but it does not guarantee the general credibility order.
In contrast, despite its reasonably competitive MSE, Model~\ref{mod.1} fails to guarantee both the base and the general credibility orders,
which undermines its suitability for credibility-based ratemaking.

% \begin{table}[ht]
% \centering
% \caption{Out-of-sample validation results}
% \label{tab.2}
% \begin{tabular}{lccccc}
% \toprule
% \multirow{2}{*}{\quad\quad\quad\quad\quad\quad\textbf{Model}}
%  & \multicolumn{3}{c}{\textbf{Predictive measures}}
%  & \multicolumn{2}{c}{\textbf{Guaranteed credibility order}} \\
% \cmidrule(lr){2-4} \cmidrule(lr){5-6}
%  & \textbf{MSE} & \textbf{MAE} & \textbf{log-likelihood}
%  & \textbf{Base} & \textbf{General} \\
% \midrule
% (Model 2) Gaussian RE Poisson-hurdle       & 1.0489 & 0.5990 & -208.9472 & No & No \\
% (Model 3) Independent RE Poisson-hurdle    & 0.9466 & 0.5951 &-214.1023 & Yes & No \\
% (Model 4) Comonotonic RE Poisson-hurdle    & 0.9382 & 0.5994 &-208.5055 & Yes & Yes \\
% \midrule
% (BM 1) Poisson-Gamma GLMM                          & 1.0561 & 0.6038 &-207.9895 & Yes & Yes \\
% (BM 2) Poisson GLM                                  & 2.0241 & 0.8575 & -266.5104 & - & - \\
% (BM 3) Poisson hurdle                               & 2.4239 & 0.9158 & -301.3224 & - & - \\
% (BM 4) Poisson zero-inflated                        & 1.9762 & 0.8396 & -256.4331 & - & - \\
% \bottomrule
% \end{tabular}
% \end{table}

\begin{table}[ht]
\centering
\caption{Out-of-sample validation results}
\label{tab.2}
\begin{tabular}{lccc}
\toprule
\multirow{2}{*}{\quad\quad\quad\quad\quad\quad\textbf{Model}}
 & \multicolumn{1}{c}{\textbf{Predictive measure}}
 & \multicolumn{2}{c}{\textbf{Guaranteed credibility order}} \\
\cmidrule(lr){2-2} \cmidrule(lr){3-4}
 & \textbf{MSE} & \textbf{Base} & \textbf{General} \\
\midrule
(Model 2) Gaussian RE Poisson-hurdle       & 1.0489 & No & No \\
(Model 3) Independent RE Poisson-hurdle    & 0.9466 & Yes & No \\
(Model 4) Comonotonic RE Poisson-hurdle    & 0.9382 & Yes & Yes \\
\midrule
(BM 1) Poisson GLMM                  & 1.1171 & Yes & Yes \\
(BM 2) Poisson GLM                         & 2.0241 & - & - \\
(BM 3) Poisson hurdle                      & 2.4239 & - & - \\
(BM 4) Poisson zero-inflated               & 1.9762 & - & - \\
\bottomrule
\end{tabular}
\end{table}

In conclusion, considering both predictive accuracy and guaranteed monotonicity-namely, satisfaction of the base and general credibility orders, and even the likelihood-ratio order-the Poisson-hurdle mixture model with comonotonic random effects (Model \ref{mod.05}) can be regarded as the best overall choice.
Therefore, when a hurdle model is adopted as the analysis framework, fitting Model \ref{mod.05} is arguably the most practical option.

% \section{Extensions}\label{sec.7}

\section{Conclusion}\label{sec.8}

This paper examined stochastic monotonicity in counting models for excessive zeros with random effects and demonstrated, both theoretically and numerically, that hurdle specifications with correlated random effects can violate the credibility order. To clarify when this property is preserved, we analyzed a Poisson--hurdle model with comonotonic random effects and showed that it satisfies the credibility order. We then extended this result to the Negative Binomial and zero-inflated counterparts, obtaining parallel monotonicity guarantees.

Several directions remain for future research. First, while extending the framework to dynamic random-effect models is conceptually straightforward, we focused on static random effects for simplicity and clarity of exposition. A natural next step is to develop dynamic specifications in which random effects evolve over time while maintaining stochastic monotonicity. Furthermore, the current formulation is distribution-specific; generalizing the framework to accommodate broader distributional families within the class of excessive-zero models would be another promising direction for future work.

	\section*{Acknowledgements}
	%The authors thank the anonymous referees and editor for their insightful comments, which significantly improved the paper.
	Jae Youn Ahn acknowledges partial support from the Institute of Information \& Communications Technology Planning \& Evaluation (IITP) grant funded by the Korea government (MSIT) (No.RS-2022-00155966, Artificial Intelligence Convergence Innovation Human Resources Development (Ewha Womans University)) and a National Research Foundation of Korea (NRF) grant (2021R1A6A1A10039823, RS-2023-00217022, RS-2025-23524530).

    %\section*{Data Availability Statement}

	%-----------------------------------------------------------------------------------------------------------------------------------------------
	\pagebreak
	\bibliographystyle{apalike}
	\bibliography{bib_tex}
	
	\pagebreak

	\appendix

    \section{Proofs}
\label{app:proofs}

% \section*{Appendix}\label{app.1}

\subsection{Proof of Lemma \ref{lem.2}}
\begin{proof}
We first prove part i. Proof of part i is equivalent with
\begin{equation}\label{eq.1}
f(\theta \mid y_1)f(\theta^* \mid y_1')\le f(\theta\wedge \theta^*\mid y_1)f(\theta\vee \theta^*\mid y_1')
\end{equation}
for any integers \(1 \le y_1 \le y_1'\) and $\theta, \theta^*\in\Real^2$.
Note that \eqref{eq.1} is equivalent with
\begin{equation}\label{eq.2}
f(y_1\mid \theta)\pi(\theta)f(y_1'\mid \theta^*)\pi(\theta^*) \le
f(y_1\mid \theta\wedge \theta^*)\pi(\theta\wedge \theta^*)f(y_1'\mid \theta \vee \theta^*)\pi(\theta \vee \theta^*).
\end{equation}

To show \eqref{eq.2}, we first have
\begin{equation}\label{eq.3}
\pi(\theta)\pi(\theta^*) \le \pi(\theta\wedge \theta^*)\pi(\theta \vee \theta^*).
\end{equation}
$\theta, \theta^*\in\Real^2$ which is from the well known result that \(\Theta\sim \mathcal{N}(\mu,\Sigma)\) is MTP$_2$ if and only if \(\rho\ge0\) \citep{bolviken1982probability, karlin1983m}.

Now we prove
\begin{equation}\label{eq.4}
f(y_1\mid \theta)f(y_1'\mid \theta^*) \le
f(y_1\mid \theta\wedge \theta^*)f(y_1'\mid \theta \vee \theta^*)
\end{equation}
for any integers \(1 \le y_1 \le y_1'\) and $\theta, \theta^*\in\Real^2$.
Moving every term to the right, the equality is equivalent with
It is enough to show
\[
1\le \frac{f(y_1\wedge y_1^*| \theta\wedge \theta^*)f(y_1\vee y_1^*|\theta\vee \theta^*)}{f(y_1|\theta)f(y_1^*|\theta^*)}.
\]
Note that, for integer $y_1\ge 1$,  we have
\[
f(y_1|\theta)=f(z_1=1, n_1=y_1-1\mid \theta)=\sigma(\theta_1) \frac{e^{-{\rm exp}(\theta_2)}e^{\theta_2(y_1-1)}}{(y_1-1)!}.
\]
As a result we have
\[
\begin{aligned}
\frac{f(y_1\wedge y_1^*| \theta\wedge \theta^*)f(y_1\vee y_1^*|\theta\vee \theta^*)}{f(y_1|\theta)f(y_1^*|\theta^*)}
&=\frac{\left(e^{\theta_2}\wedge e^{\theta_2^*}\right)^{y_1\wedge y_1^*-1}}{e^{\theta_2(y_1-1)}}
\frac{\left(e^{\theta_2}\vee e^{\theta_2^*}\right)^{y_1\vee y_1^*-1}}{e^{\theta_2^*(y_1^*-1)}}\\
&=\begin{cases}
  1, & \theta_2\le \theta_2^*, \quad y_1\le y_1^*\\
  \left( \frac{e^{\theta_2^*}}{e^{\theta_2}}\right)^{y_1-y_1^*}\ge 1, & \theta_2\le \theta_2^*, \quad y_1\ge y_1^*\\
  \left( \frac{e^{\theta_2}}{e^{\theta_2^*}}\right)^{y_1^*-y_1}\ge 1, & \theta_2\ge \theta_2^*, \quad y_1\le y_1^*\\
  1, & \theta_2\ge \theta_2^*, \quad y_1\ge y_1^*\\
\end{cases}
\end{aligned}
\]
which concludes the proof of \eqref{eq.4}.

Finally, combining \eqref{eq.3} and \eqref{eq.4}, we conclude the proof of part i.

Now we prove part ii.
By the tower property, we have
\[
\begin{aligned}
\E{Y_2\mid y_1}&=\E{\E{Y_2\mid y_1, \Theta}\mid y_1}\\
&=\E{\E{Y_2\mid \Theta}\mid y_1}
\end{aligned}
\]
where the second equality is from the conditional independence of $Y_1$ and $Y_2$ conditional on $\Theta$.
Hence, by part i and the fact that the likelihood ratio order implies the stochastic order, it is enough to show
\[
g(\theta):=\E{h(Y_2)\mid \Theta=\theta}
\]
is non-decreasing function of $\theta\in\Real^2$ for all non-decreasing function $h:\Real_{\ge 0} \rightarrow \Real$ for which the expectation exists.
By the tower property, we have
\[
\begin{aligned}
g(\theta) &= h(0)\P{Y_2=0\mid \theta} + \E{h(Y_2)\mid \theta, Y_2\ge 1 }\P{Y_2\ge 1\mid \theta}\\
&= h(0)\P{Y_2=0\mid \theta} + \E{h(Y_2)\mid \theta, Z_2= 1 }\P{Y_2\ge 1\mid \theta}\\
&= h(0)\P{Y_2=0\mid \theta} + \E{h(Y_2)\mid \theta }\P{Y_2\ge 1\mid \theta}\\
&=h(0)(1-\sigma(\theta_1)) + \sigma(\theta_1)\E{h(1+N_2)\mid \theta}\\
&=h(0)(1-\sigma(\theta_1)) + \sigma(\theta_1)\E{h(1+N_2)\mid \theta_2}.\\
\end{aligned}
\]
where the third equality is from the conditional independence between $Z_t$ and $N_t$ for given $\Theta$.

We have
\[
\begin{aligned}
\frac{\partial g}{\partial \theta_1}&=\sigma'(\theta_1)\left ( \E{h(1+N_{2}) \mid \theta_2}-h(0)\right)\\
&\ge 0
\end{aligned}
\]
where the inequality follows because $h(1+N_2) \ge h(0)$ almost surely. Furthermore, we also have
\[
\begin{aligned}
\frac{\partial g}{\partial \theta_2}&=\sigma(\theta_1)\frac{\partial}{\partial \theta_2} \E{h(1+N_2) \mid \theta_2}
&\ge 0
\end{aligned}
\]
where the inequality is from the likelihood ratio of Poisson distribution.
Hence, $g:\Real^2\rightarrow \Real$ is a non-decreasing function of $\theta\in\Real^2$, which concludes the proof of part ii.
\end{proof}

\subsection{Proof of Theorem \ref{thm.1}}
\begin{proof}
First, we observe
\begin{equation}\label{eq.p1}
  \begin{aligned}
    \E{Y_2 \mid Y_1=y_1} &= \E{\E{Y_2\mid \Theta} \mid Y_1=y_1}\\
    &=\E{\sigma(\Theta_1)(1+{\rm exp}(\Theta_2)) \mid Y_1=y_1}\\
    &=\frac{\E{\sigma(\Theta_1)(1+{\rm exp}(\Theta_2))1_{[Y_1=y_1]}}}{\P{Y_1=y_1}}\\
    &=\frac{\E{\sigma(\Theta_1)(1+{\rm exp}(\Theta_2))\P{Y_1=y_1 \mid \Theta} }}{\E{\P{Y_1=y_1\mid \Theta}}}\\
  \end{aligned}
\end{equation}
where the first equality is from the tower property and conditional independence between $Y_1$ and $Y_2$ conditional on $\Theta$, and the final equation is again from the tower property.

The characteristic function of the random effects $(\Theta_1, \Theta_2)$ is
\[
\begin{aligned}
\varphi(t)&:=\E{e^{\,i\,(t_1, t_2)\Theta}}\\
&={\rm exp}\left( i(\mu_1, \mu_2) \begin{pmatrix}
                                                                                               t_1 \\
                                                                                               t_2
                                                                                             \end{pmatrix}  +\frac{1}{2} (t_1, t_2) \begin{pmatrix}
                                                                                                                                      \sigma_1^2 & \rho\sigma_1\sigma_2 \\
                                                                                                                                      \rho\sigma_1\sigma_2 & \sigma_2^2
                                                                                                                                    \end{pmatrix}\begin{pmatrix}
                                                                                               t_1 \\
                                                                                               t_2
                                                                                             \end{pmatrix}\right)\\
\end{aligned}
\]
for $t=(t_1,t_2)^\top\in\mathbb{R}^2$.
As $\sigma_1\rightarrow 0^+$, we have %we have $\Sigma_{\epsilon}\to\Sigma_{0}:=\begin{pmatrix}0&0\\0&\sigma_2^2\end{pmatrix}$, hence
\[
\lim\limits_{\sigma_1\rightarrow 0^+}\varphi(t)=
\exp\!\Big(i(\mu_1 t_1+\mu_2 t_2)-\tfrac12\,\sigma_2^2 t_2^2\Big),
\qquad t\in\mathbb{R}^2.
\]
Note that $\exp\!\Big(i(\mu_1 t_1+\mu_2 t_2)-\tfrac12\,\sigma_2^2 t_2^2\Big)$ is the characteristic function of
\((\mu_1,\Theta_2^{*})\) with \(\Theta_2^{*}\sim\mathcal N(\mu_2,\sigma_2^2)\); i.e., the first coordinate is degenerated.
Therefore, by Lévy’s continuity theorem, we have
\begin{equation}\label{eq.p2}
\Theta \stackrel{d}{\rightarrow} \Theta^{*}\qquad\text{as }\sigma_1\to0^+
\end{equation}
where $\stackrel{d}{\rightarrow}$ represent convergence in distribution. Combining \eqref{eq.p1} and \eqref{eq.p2} as well as the dominated convergence theorem, we have
\[
\begin{aligned}
\lim\limits_{\sigma_1\rightarrow 0^+} \E{Y_2 \mid Y_1=0} &= \frac{\E{\sigma(\mu_1) (1+{\rm exp}(\Theta_2^*))(1-\sigma(\mu_1))} }{1-\sigma(\mu_1)}\\
&=\sigma(\mu_1)\left( 1+ {\rm exp}\left( \mu_2 + \frac{1}{2} \sigma_2^2\right)\right)
\end{aligned}
\]
and
\[
\begin{aligned}
\lim\limits_{\sigma_1\rightarrow 0^+} \E{Y_2 \mid Y_1=1}&=
\frac{\E{\sigma(\mu_1) (1+{\rm exp}(\Theta_2^*))\sigma(\mu_1){\rm exp} \left( -e^{\Theta_2^*}\right)} }{\E{\sigma(\mu_1)  {\rm exp}\left( -e^{\Theta_2^*}\right) }}\\
&=
\sigma(\mu_1)\frac{\E{ (1+{\rm exp}(\Theta_2^*)){\rm exp} \left( -e^{\Theta_2^*}\right)} }{\E{  {\rm exp}\left( -e^{\Theta_2^*}\right) }}\\
\end{aligned}
\]
As a result, for $W:=e^{\Theta_2^*}$, we have
\[
\begin{aligned}
\lim\limits_{\sigma_1\rightarrow 0^+} \left(\E{Y_2 \mid Y_1=0}-\E{Y_2 \mid Y_1=1}\right)
&=\lim\limits_{\sigma_1\rightarrow 0^+} \E{Y_2 \mid Y_1=0}-\lim\limits_{\sigma_1\rightarrow 0^+} \E{Y_2 \mid Y_1=1}\\
&=\sigma(\mu_1)\left( \E{W} - \frac{\E{We^{-W}}}{\E{e^{-W}}}\right)\\
&= - \sigma(\mu_1)\left( \frac{\E{We^{-W}}- \E{W}\E{e^{-W}}}{\E{e^{-W}}}\right)\\
&= - \sigma(\mu_1)\left( \frac{\cov{W, e^{-W}}}{\E{e^{-W}}}\right)\\
&>0
\end{aligned}
\]
where the inequality is from the fact that $e^{-W}$ is a strictly decreasing function of $W$ and the fact that $W$ is nondegenerate, hence $\operatorname{Cov}(X,e^{-X})<0$.

\end{proof}

% \section{Auxiliary Results}
% \label{app:Lemma}

\section{Extensions}
\label{app:ext}
In Section~\ref{sec.5}, we examined the Poisson--hurdle model with comonotonic random effects, which was shown to satisfy the general credibility order.
This section extends that framework in two directions:
first, by replacing the Poisson component with a Negative Binomial specification, and second, by introducing zero-inflated counterparts.
For each extension, we derive the corresponding conditions under which the general credibility order property is preserved.

\subsection{Extension to the Negative Binomial model}

The following specification modifies Model~\ref{mod.05} by replacing the Poisson positive-count component with a Negative Binomial component, and the subsequent result establishes that this model satisfies the general credibility order.

\begin{model}\label{mod.6}
For the given exogenous
\[
(c_t)_{t\ge 1}\subset \Real \quad\hbox{and}\quad (d_t)_{t\ge 1}\subset \Real,
\]
the \textbf{Negative Binomial–hurdle mixture model with comonotonic random effects}for $(Y_{t})_{t\ge 1}$ is defined as
		\[
		Y_{t} := Z_{t}(1+N_{t}), \quad t \ge 1,
		\]
		under the following assumptions:
		
		\begin{itemize}
\item[i.] \textbf{Observation distribution.}
Conditional on $\Theta$, the pairs $(Z_t,N_t)$ are independent \ across $t$, and
$Z_t \perp N_t \mid \Theta$.
Their conditional distributions are
\begin{equation}\label{eq.d.10}
Z_t \mid \Theta \;\sim\; \mathrm{Ber}\!\big(\sigma(c_t + \Theta)\big),
\qquad
N_t \mid \Theta \;\sim\; \mathrm{NB}\!\Big(r,\, \sigma\!\big(d_t + \Theta/r\big)\Big).
\end{equation}

\item[ii.] \textbf{Random effect distribution.}
Assume
\[
\Theta\sim \operatorname{N}(0, s^2).
\]
for a constant $s^2>0$.
		\end{itemize}
		
	\end{model}

\begin{theorem}\label{thm.22}
Consider the setting of  Model~\ref{mod.6}.
Then, for any \(t \in \mathbb{Z}_{\ge 1}\) and any histories \(y_{1:t},\,y'_{1:t} \in \mathbb{Z}_{\ge 0}^t\) with \(y_{1:t} \le y'_{1:t}\), we have
\[
\bigl[\,Y_{t+1}\mid Y_{1:t}=y_{1:t}\bigr]
\;\le_{\mathrm{LR}}\;
\bigl[\,Y_{t+1}\mid Y_{1:t}=y'_{1:t}\bigr].
\]
\end{theorem}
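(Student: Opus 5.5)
The plan follows the proof of Theorem~\ref{thm.2}. First I show that the one-period conditional mass function $\eta_t(y,\theta):=\P{Y_t=y\mid\Theta=\theta}$ is TP$_2$ in $(y,\theta)$ for every $t$. Then Lemma~\ref{lem.0} gives MTP$_2$ of $Y_{1:t+1}$, and Lemma~\ref{lem.1} gives the likelihood ratio order. Unlike Theorem~\ref{thm.2}, no side condition on the link should be needed: the scaling $\Theta/r$ inside the NB success probability is exactly what makes the hurdle step automatically TP$_2$.

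Write $p_t(\theta):=\sigma(d_t+\theta/r)$. Then
\[
\eta_t(0,\theta)=1-\sigma(c_t+\theta),
\]
and for $y\ge1$
\[
\eta_t(y,\theta)=\sigma(c_t+\theta)\binom{y+r-2}{y-1}\bigl(1-p_t(\theta)\bigr)^r p_t(\theta)^{\,y-1}.
\]
As in Lemma~\ref{lem:TP2-time-s}, TP$_2$ reduces to showing that $\eta_t(y+1,\theta)/\eta_t(y,\theta)$ is nondecreasing in $\theta$ for every $y\in\mathbb{Z}_{\ge0}$. All masses are strictly positive, so the ratios are well defined.

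\textbf{Case $y\ge1$.} The ratio equals $\frac{y+r-1}{y}\,p_t(\theta)$. This is nondecreasing because $\sigma$ is increasing and $r>0$.

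\textbf{Case $y=0$ (the crossing of the hurdle).} This is the step I expect to be the crux. It is the analogue of the factor $1-\partial_\theta\lambda$ in \eqref{eq.pf.1}. Using $\sigma(x)/(1-\sigma(x))=e^{x}$ and $1-p_t(\theta)=(1+e^{d_t+\theta/r})^{-1}$, the ratio is
\[
\exp\!\Bigl(c_t+\theta-r\ln\bigl(1+e^{d_t+\theta/r}\bigr)\Bigr).
\]
The derivative of the exponent is
\[
1-r\cdot\tfrac1r\,\sigma(d_t+\theta/r)=1-\sigma(d_t+\theta/r)>0.
\]
Hence the ratio is increasing for all $\theta\in\mathbb{R}$, with no restriction on $r$, $c_t$, $d_t$ or $s^2$. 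In effect, $r\ln(1+e^{d_t+\theta/r})$ is a softplus with slope less than $1$, mirroring Remark~\ref{rem.1}.

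With TP$_2$ of each $\eta_s$ established, the $Y_s$ are conditionally independent given the scalar $\Theta\sim\operatorname{N}(0,s^2)$, since $(Z_s,N_s)$ are independent across $s$. Lemma~\ref{lem.0} then yields that $Y_{1:t+1}$ is MTP$_2$. Every history in $\mathbb{Z}_{\ge0}^t$ has positive probability, so it lies in the set $S$ of Lemma~\ref{lem.1}. Applying Lemma~\ref{lem.1} with $i=t+1$ gives
\[
\bigl[Y_{t+1}\mid Y_{1:t}=y_{1:t}\bigr]\le_{\mathrm{LR}}\bigl[Y_{t+1}\mid Y_{1:t}=y'_{1:t}\bigr]
\]
for all $y_{1:t}\le y'_{1:t}$.
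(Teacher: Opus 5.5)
Your proposal is correct and follows essentially the same route as the paper. The paper likewise reduces, via the argument of Theorem~\ref{thm.2} (Lemmas~\ref{lem.0} and~\ref{lem.1}), to showing that $\P{Y_j=y+1\mid\theta}/\P{Y_j=y\mid\theta}$ is nondecreasing in $\theta$, and obtains $\frac{y+r-1}{y}\,\sigma(d_j+\theta/r)$ for $y\ge1$ and a log-derivative equal to $1-\sigma(d_j+\theta/r)\ge 0$ at the hurdle $y=0$, exactly as you do; your explicit checks that all masses are positive and that every history lies in the set $S$ of Lemma~\ref{lem.1} are minor details the paper leaves implicit.
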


\begin{proof}
  Under the similar logic as in the proof of Theorem~\ref{thm.2}, it is enough to show that
\[
\gamma(\theta, y_j)
:=\frac{\P{Y_j=y_j+1 \mid \Theta=\theta}}{\P{Y_j=y_j \mid \Theta=\theta}}
\]
is the non-decreasing function of $\theta$ for fixed $y_j\in\mathbb{Z}_{\ge 0}$.

We first investigate the case $y_j=0$.
By Model~\ref{mod.6}, we have
\[
\P{Y_j=0\mid \Theta=\theta}=1-\sigma(c_j+\theta),
\]
and
\[
\P{Y_j=1\mid \Theta=\theta}
=\sigma(c_j+\theta)\,\bigl(1-\sigma(d_j+\theta/r)\bigr)^r.
\]
Hence,
\[
\gamma(\theta,0)
=\frac{\sigma(c_j+\theta)}{1-\sigma(c_j+\theta)}\,\bigl(1-\sigma(d_j+\theta/r)\bigr)^r
=\exp(c_j+\theta)\,\bigl(1-\sigma(d_j+\theta/r)\bigr)^r.
\]
Since $\gamma(\theta,0)>0$, $\frac{\partial }{\partial \theta} \gamma(\theta,0)\ge 0$ is equivalent to
\[
\frac{\partial}{\partial \theta}\log \gamma(\theta,0)\ge 0.
\]
Since we have
\[
\begin{aligned}
\frac{\partial}{\partial \theta}\log \gamma(\theta,0)
&= \frac{\partial}{\partial \theta}\Bigl\{(c_j+\theta)+r\log\bigl(1-\sigma(d_j+\theta/r)\bigr)\Bigr\}\\
&= 1 + r\,\frac{\partial}{\partial \theta}\log\bigl(1-\sigma(d_j+\theta/r)\bigr)\\
&= 1 - \sigma(d_j+\theta/r),
\end{aligned}
\]
where the last equality follows from
$\frac{\partial}{\partial x}\log(1-\sigma(x))=-\sigma(x)$.
Therefore,
\[
\frac{\partial}{\partial \theta}\log \gamma(\theta,0)\ge 0, \qquad \hbox{for all } \theta\in\Real,
\]
which implies that $\gamma(\theta,0)$ is non-decreasing in $\theta$.

In the case of $y_j\in\mathbb{Z}_{\ge 1}$, by the recursion of the Negative Binomial distribution, we have
\[
\gamma(\theta, y_j)
=\frac{y_j+r-1}{y_j}\,\sigma(d_j+\theta/r),
\]
which is an increasing function of $\theta\in\Real$.

Hence $\gamma(\theta,y_j)$ is non-decreasing in $\theta$ for all $y_j\in\mathbb{Z}_{\ge 0}$, which concludes the proof.
\end{proof}

\subsection{Extension to zero-inflated models}

The following specification modifies Model~\ref{mod.05} by replacing the hurdle structure with a zero-inflated formulation, in which structural zeros are mixed with the standard count component.
The subsequent result establishes that this model satisfies the general credibility order.

\begin{model}\label{mod.7}
For the given exogenous
\[
(c_t)_{t\ge 1}\subset \Real \quad\hbox{and}\quad (d_t)_{t\ge 1}\subset \Real,
\]
the \textbf{Poisson–zero-inflated mixture model with comonotonic random effects} for $(Y_{t})_{t\ge 1}$ is defined as
		\[
		Y_{t} := Z_{t}N_{t}, \quad t \ge 1,
		\]
		under the following assumptions:
		
		\begin{itemize}
\item[i.] \textbf{Observation distribution.}
Conditional on $\Theta$, the pairs $(Z_t,N_t)$ are independent \ across $t$, and
$Z_t \perp N_t \mid \Theta$.
Their conditional distributions are
\begin{equation}\label{eq.d.11}
Z_t \mid \Theta \;\sim\; \mathrm{Ber}\!\big(\sigma\left(c_t + \Theta\right)\big),
\qquad
N_t \mid \Theta \;\sim\; \mathrm{Poisson}\!\big(\lambda(d_t+\Theta)\big),
\end{equation}
where $\lambda: \Real \to \Real_{>0}$ is a non-decreasing differentiable function.

\item[ii.] \textbf{Random effect distribution.}
Let $\Theta$ be a real-valued random variable with distribution $\mathcal{L}(\Theta)$
supported on $\mathcal{R}(\Theta)\subseteq\mathbb{R}$.

		\end{itemize}
		
	\end{model}

\begin{theorem}\label{thm.201}
Consider the setting of  Model~\ref{mod.7}.
Suppose that, for all $\theta$ on the support of $\Theta$, we have the assumption in \eqref{eq:scaled-Lip}.
Then, for any \(t \in \mathbb{Z}_{\ge 1}\) and any histories \(y_{1:t},\,y'_{1:t} \in \mathbb{Z}_{\ge 0}^t\) with \(y_{1:t} \le y'_{1:t}\), we have
\[
\bigl[\,Y_{t+1}\mid Y_{1:t}=y_{1:t}\bigr]
\;\le_{\mathrm{LR}}\;
\bigl[\,Y_{t+1}\mid Y_{1:t}=y_{1:t}'\bigr].
\]
%
%Then, for every $t\ge1$, we have:
%\begin{itemize}
%\item[i.]
%For any $y_{1:t},y_{1:t}'\in \mathbb{Z}_{\ge 0}^t$ with $y_{1:t}\le y_{1:t}'$,
%\[
%\bigl[\,\Theta\mid Y_{1:t}=y_{1:t}\,\bigr]
%\;\le_{\mathrm{LR}}\;
%\bigl[\,\Theta\mid Y_{1:t}=y_{1:t}'\,\bigr].
%\]
%
%
%\item[ii.]
%For any $y_{1:t},y_{1:t}'\in\mathbb{Z}_{\ge 0}^t$ with $y_{1:t}\le y_{1:t}'$,
%\[
%\bigl[\,Y_{t+1}\mid Y_{1:t}=y_{1:t}\bigr]
%\;\le_{\mathrm{LR}}\;
%\bigl[\,Y_{t+1}\mid Y_{1:t}=y_{1:t}'\bigr].
%\]
%
%\end{itemize}
\end{theorem}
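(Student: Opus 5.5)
The plan follows the same three-step route as the proof of Theorem~\ref{thm.2}. First, show that the one-period conditional mass function $\eta_t(y,\theta):=\P{Y_t=y\mid\Theta=\theta}$ of Model~\ref{mod.7} is TP$_2$ in $(y,\theta)$. Second, invoke Lemma~\ref{lem.0}: conditional on $\Theta$, the $Y_t$ are independent, and each $\eta_t$ is TP$_2$, so $(Y_1,\dots,Y_{t+1})$ is MTP$_2$. Third, apply Lemma~\ref{lem.1} with $i=t+1$ to get
\[
\bigl[\,Y_{t+1}\mid Y_{1:t}=y_{1:t}\bigr]\le_{\rm LR}\bigl[\,Y_{t+1}\mid Y_{1:t}=y'_{1:t}\bigr]
\]
for all $y_{1:t}\le y'_{1:t}$. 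Only the first step needs new work.

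For the TP$_2$ property, write $p:=\sigma(c_t+\theta)$ and $\lambda:=\lambda(d_t+\theta)$. Then
\[
\eta_t(0,\theta)=1-p+p\,e^{-\lambda},\qquad \eta_t(y,\theta)=p\,e^{-\lambda}\lambda^{y}/y!\quad (y\ge1).
\]
All of these are strictly positive. Since $y$ is integer-valued, it suffices, as in the proof of Lemma~\ref{lem:TP2-time-s}, to show that the adjacent ratio $\eta_t(y+1,\theta)/\eta_t(y,\theta)$ is nondecreasing in $\theta$ for every $y\ge0$. Chaining these adjacent ratios then gives the full TP$_2$ inequality.

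\emph{Case $y\ge1$.} The ratio is $\lambda(d_t+\theta)/(y+1)$. This is nondecreasing because $\frac{\partial}{\partial\theta}\lambda(d_t+\theta)\ge0$ by \eqref{eq:scaled-Lip}.

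\emph{Case $y=0$.} This is the case where the zero-inflated structure differs from the hurdle case, and it is the step I expect to be the main obstacle. The key identity is $1-p=p\,e^{-(c_t+\theta)}$. Dividing numerator and denominator by $p\,e^{-\lambda}$ gives
\[
\frac{\eta_t(1,\theta)}{\eta_t(0,\theta)}=\frac{p\,e^{-\lambda}\lambda}{p\,e^{-(c_t+\theta)}+p\,e^{-\lambda}}
=\frac{\lambda(d_t+\theta)}{1+\exp\!\bigl(\lambda(d_t+\theta)-c_t-\theta\bigr)}.
\]
The numerator is positive and nondecreasing in $\theta$, by the lower bound in \eqref{eq:scaled-Lip}. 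In the denominator, the exponent $\lambda(d_t+\theta)-\theta$ has $\theta$-derivative $\frac{\partial}{\partial\theta}\lambda(d_t+\theta)-1\le0$, by the upper bound in \eqref{eq:scaled-Lip}. So the denominator is positive and nonincreasing, and the ratio is nondecreasing in $\theta$.

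Both parts of \eqref{eq:scaled-Lip} are therefore used exactly as in Lemma~\ref{lem:TP2-time-s}. Combining the two cases gives the TP$_2$ property of $\eta_t$ in $(y,\theta)$ for every $t$. The second and third steps then apply verbatim as in the proof of Theorem~\ref{thm.2}.
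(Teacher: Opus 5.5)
Your proposal is correct and takes essentially the paper's route. Like the paper, you reduce via Lemma~\ref{lem.0} and Lemma~\ref{lem.1} to showing that the adjacent ratios $\eta_t(y+1,\theta)/\eta_t(y,\theta)$ are nondecreasing in $\theta$, and for $y=0$ the paper writes the reciprocal ratio as $e^{\lambda(d_t+\theta)-c_t-\theta}/\lambda(d_t+\theta)+1/\lambda(d_t+\theta)$ and shows each term is nonincreasing (invoking Lemma~\ref{lem:TP2-time-s} for the first), which is just your closed form $\lambda(d_t+\theta)/\bigl(1+\exp(\lambda(d_t+\theta)-c_t-\theta)\bigr)$ inverted.
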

\begin{proof}
    Under the similar logic as in the proof of Theorem \ref{thm.2}, it is enough to show that
\[
\begin{aligned}
\gamma(\theta, y)
%&:=\frac{\pi(\theta \mid y_{-j},\,y)}{\pi(\theta \mid y_{-j},\,y+1)}\\
&:=\frac{\P{Y_j=y+1 \mid \theta}}{\P{Y_j=y \mid \theta}}
\end{aligned}
\]
is the non-decreasing function of $\theta\in \mathcal{R}(\Theta)$ for fixed $y\in\mathbb{Z}_{\ge 0}$.

For $y\in\mathbb{Z}_{\ge 1}$, we have
\[
\gamma(\theta, y)=\frac{\lambda(d_j+\theta)}{y+1}
\]
which is clearly the non-decreasing function of $\theta\in \mathcal{R}(\Theta)$.

It remains to show the case when $y=0$. When $y=0$, we have
\begin{equation}\label{eq.p12}
1/\gamma(\theta, y)=\frac{\P{Z_j=0 \mid \theta}}{\P{Z_j=1 \mid \theta}\P{N_j=1 \mid \theta}} + \frac{\P{N_j=0 \mid \theta}}{\P{N_j=1 \mid \theta}}
\end{equation}
Since the second term in \eqref{eq.p12} is non-increasing function of $\theta$, it is enough to show the first term in \eqref{eq.p12} is non-increasing function of $\theta$.
The first term in \eqref{eq.p12} can be written as
\begin{equation}\label{eq.p13}
\frac{\P{Z_j=0 \mid \theta}}{\P{Z_j=1 \mid \theta}\P{N_j=1 \mid \theta}}=
\left(\frac{\P{Z_j=0 \mid \theta}}{\P{Z_j=1 \mid \theta}\P{N_j=0 \mid \theta}}\right) \frac{1}{\lambda(d_t+\theta)}.
\end{equation}
By the assumption in \eqref{eq:scaled-Lip} and Lemma \ref{lem:TP2-time-s}, the first multiplicative term in \eqref{eq.p13}
\[
\frac{\P{Z_j=0 \mid \theta}}{\P{Z_j=1 \mid \theta}\P{N_j=0 \mid \theta}}
\]
a non-increasing function of $\theta$. Since the second multiplicative term in \eqref{eq.p13} is also non-increasing function of $\theta$, we conclude that \eqref{eq.p13} is also a non-increasing function of $\theta$, which in turn implies \eqref{eq.p12} is also non-increasing function of $\theta$.

\end{proof}

\section{Additional Examples}
\label{app:Example}

\subsection{Counterexample to Lemma~\ref{lem.1}}
Violations of the base credibility order also occur when a negative correlation exists between the random effects.
Table~\ref{rho_simulation} reports Monte Carlo estimates ($S=300{,}000$) obtained by varying $\rho$ while fixing $\sigma_1=\sigma_2=0.5$, $\mu_1=0$, and $\mu_2=-2$.
When $\rho$ becomes negative, reversals in the posterior expectation are observed, which provides a counterexample to Lemma~\ref{lem.1}.

\begin{table}[ht]
\centering
\caption{Comparison of
$\E{Y_2 \mid Y_1=0}$ and $\E{Y_2 \mid Y_1=1}$ as $\rho$ decreases}
\label{rho_simulation}
\begin{tabular}{ccc}
\toprule
$\rho$ & $\mathbb{E}[Y_2 \mid Y_1 = 1]$ & $\mathbb{E}[Y_2 \mid Y_1 = 2]$ \\
\midrule
$-0.8$ & 0.6277 (0.0013) & 0.5711 (0.0011) \\
$-0.5$ & 0.6311 (0.0014) & 0.6104 (0.0014) \\
$0$    & 0.6399 (0.0015) & 0.6874 (0.0021) \\
$0.5$  & 0.6447 (0.0019) & 0.7566 (0.0020) \\
$0.8$  & 0.6493 (0.0023) & 0.7930 (0.0025) \\
\bottomrule
\end{tabular}
\caption*{\footnotesize Note: Parentheses report MCSEs.}
\end{table}

\subsection{Examples of Remark~\ref{rmk.1}}
%In the zero-inflated model with correlated random effects described in 
As briefly mentioned in Remark~\ref{rmk.1}, a violation of the credibility order also arises depending on the magnitude of the random-effects variances in the zero-inflated model, Model \ref{mod.7}. The examples below present simulation results for a zero-inflated model in which the random effects \(\Theta_1\) and \(\Theta_2\) are assumed to follow a bivariate normal distribution, as in Model~\ref{mod.1}. Table~\ref{zip_sigma1_simulation} reports Monte Carlo estimates of
$\mathbb{E}[Y_{2}\mid Y_{1}=0]$ and $\mathbb{E}[Y_{2}\mid Y_{1}=1]$
obtained by varying $\sigma_1^2$ while fixing $\mu_1=\mu_2=0$, $\sigma_2^2=1$, and $\rho=0.5$.
As $\sigma_1^2$ decreases, the ordering between the two conditional expectations is reversed, namely, $\E{Y_2 \mid Y_1=1}\le \E{Y_2 \mid Y_1=0}$. Table~\ref{zip_sigma_2_simulation} presents Monte Carlo estimates obtained by varying $\sigma_{2}^2$
while fixing $\mu_1=\mu_2=0$, $\sigma_{1}^2=1$, and $\rho=0.5$.
As $\sigma_{2}^2$ increases, violations of the base credibility order are again observed.

\begin{table}[ht]
\centering
\caption{Comparison of
$\E{Y_2 \mid Y_1=0}$ and $\E{Y_2 \mid Y_1=1}$ as $\sigma_{1}^2$ decreases}
\label{zip_sigma1_simulation}
\begin{tabular}{ccc}
\toprule
$\sigma_{1}^2$ & $\mathbb{E}[Y_2 \mid Y_1 = 0]$ & $\mathbb{E}[Y_2 \mid Y_1 = 1]$ \\
\midrule
$2.00$ & 0.5426 (0.0077) & 0.8240 (0.0067) \\
$1.00$ & 0.5988 (0.0081) & 0.7503 (0.0056) \\
$0.10$ & 0.6846 (0.0094) & 0.6032 (0.0041) \\
$0.01$ & 0.6849 (0.0097) & 0.5675 (0.0033) \\
\bottomrule
\end{tabular}
\caption*{\footnotesize Note: Parentheses report MCSEs.}
\end{table}

\begin{table}[ht]
\centering
\caption{Comparison of
$\E{Y_2 \mid Y_1=0}$ and $\E{Y_2 \mid Y_1=1}$ as $\sigma_{2}^2$ increases}
\label{zip_sigma_2_simulation}
\begin{tabular}{ccc}
\toprule
$\sigma_{2}^2$ & $\mathbb{E}[Y_2 \mid Y_1 = 0]$ & $\mathbb{E}[Y_2 \mid Y_1 = 1]$ \\
\midrule
0.1 & 0.4742 (0.0027) & 0.6740 (0.0030) \\
1.0 & 0.5988 (0.0081) & 0.7503 (0.0056) \\
2.0 & 0.8857 (0.0282) & 0.7278 (0.0053) \\
3.0 & 1.2649 (0.0489) & 0.6958 (0.0052) \\
\bottomrule
\end{tabular}
\caption*{\footnotesize Note: Parentheses report MCSEs.}
\end{table}

\end{document}